\documentclass[times,review,10pt]{elsarticle}

\usepackage{url}

\usepackage{mathtools}
\usepackage{amsmath}
\usepackage{amssymb}
\usepackage{amsthm}
\usepackage{booktabs}
\usepackage{multirow}
\usepackage{longtable}
\usepackage{hyperref}
\usepackage{caption}
\newtheorem{theorem}{Theorem}[section]
\newtheorem{corollary}{Corollary}[theorem]

\newtheorem{proposition}[theorem]{Proposition}
\usepackage{xcolor}
\newcommand{\defeq}{\vcentcolon=}

\usepackage{algorithm}
\usepackage[noend]{algpseudocode}
\algnewcommand\algorithmicforeach{\textbf{for each}}
\algdef{SE}[FOR]{ForEach}{EndForEach}[1]{\algorithmicforeach\ #1\ \algorithmicdo}{\algorithmicend\ \algorithmicforeach}%

\journal{Fuzzy Sets and Systems}

\begin{document}

\begin{frontmatter}

%% Title, authors and addresses

%% use the tnoteref command within \title for footnotes;
%% use the tnotetext command for theassociated footnote;
%% use the fnref command within \author or \affiliation for footnotes;
%% use the fntext command for theassociated footnote;
%% use the corref command within \author for corresponding author footnotes;
%% use the cortext command for theassociated footnote;
%% use the ead command for the email address,
%% and the form \ead[url] for the home page:
%% \title{Title\tnoteref{label1}}
%% \tnotetext[label1]{}
%% \author{Name\corref{cor1}\fnref{label2}}
%% \ead{email address}
%% \ead[url]{home page}
%% \fntext[label2]{}
%% \cortext[cor1]{}
%% \affiliation{organization={},
%%            addressline={}, 
%%            city={},
%%            postcode={}, 
%%            state={},
%%            country={}}
%% \fntext[label3]{}

\title{Spectrally Tuned Bandwidth Selection for Kernel Fuzzy Relational Clustering} %% Article title

%% use optional labels to link authors explicitly to addresses:
%% \author[label1,label2]{}
%% \affiliation[label1]{organization={},
%%             addressline={},
%%             city={},
%%             postcode={},
%%             state={},
%%             country={}}
%%
%% \affiliation[label2]{organization={},
%%             addressline={},
%%             city={},
%%             postcode={},
%%             state={},
%%             country={}}

\author[label1]{Efthymios Costa} %% Author name

%% Author affiliation
\affiliation[label1]{organization={Imperial College London},%Department and Organization
            addressline={Exhibition Road}, 
            city={London},
            postcode={SW7 2AZ}, 
            %state={},
            country={United Kingdom}}

\author[label2]{John R.J. Thompson} %% Author name

%% Author affiliation
\affiliation[label2]{organization={The University of British Columbia},%Department and Organization
            addressline={3333 University Way}, 
            city={Kelowna},
            postcode={V1V 1V7}, 
            state={British Columbia},
            country={Canada}}

%% Abstract
\begin{abstract}
Fuzzy clustering is used to identify overlapping geometric cluster structures through partial memberships. However, classical methods are limited by the assumption of equal variable importance and by sensitivity to the fuzzifier parameter. These limitations may yield equal cluster membership probabilities, which we refer to as the uniform solution. To address these issues, we propose Kernel Fuzzy Relational Clustering (KFRC) equipped with a bandwidth selection algorithm tuned via the spectral properties of the induced kernel Gram matrix. The KFRC framework implicitly performs unsupervised kernel metric learning by controlling the geometric embedding of the data through adjustable bandwidth parameters. We conduct a formal stability analysis to identify the exact theoretical conditions under which relational clustering collapses, thereby ensuring the stable performance of KFRC. We find that our two-stage bandwidth selection procedure adapts to the data structure while actively avoiding the uniform solution. Furthermore, this theoretical analysis leads to the proposal of a novel fuzzifier function that presents distinct advantages over the power fuzzifier function. We conduct experiments on several synthetic and publicly available data sets to demonstrate that the proposed framework consistently recovers complex structures that traditional methods fail to resolve, while ensuring a purely fuzzy solution.
\end{abstract}

%%Graphical abstract
%\begin{graphicalabstract}
%\includegraphics{grabs}
%\end{graphicalabstract}

%%Research highlights
% \begin{highlights}
% \item Develop a novel kernel-based fuzzy relational clustering algorithm.
% \item Design bandwidth selection to prevent collapse and ensure cluster quality.
% \item Define uniform collapse in fuzzy clustering mathematically.
% \item Conduct a sensitivity analysis to investigate the conditions for uniform collapse under mild assumptions and in full generality.
% \item Identify the conditions under which fuzzy relational clustering collapses.
% \item Introduce a fuzzifier function avoiding uniform collapse in many-cluster settings.
% \end{highlights}

%% Keywords
\begin{keyword}
Fuzzy clustering \sep relational data \sep kernel functions \sep bandwidth selection \sep fuzzifier selection
%% keywords here, in the form: keyword \sep keyword

%% PACS codes here, in the form: \PACS code \sep code

%% MSC codes here, in the form: \MSC code \sep code
%% or \MSC[2008] code \sep code (2000 is the default)

\end{keyword}

\end{frontmatter}

%% Add \usepackage{lineno} before \begin{document} and uncomment 
%% following line to enable line numbers
%% \linenumbers

\section{Introduction}
Clustering seeks to group $n$ objects $\mathbf{X}=\{\mathbf{x}_1,\ldots,\mathbf{x}_n\}$ measured on $p$ variables (that is, $\mathbf{x}_j = (x_{j1}, \ldots, x_{jp})^\top$) into groups or clusters of similar points, while simultaneously separating clusters that are dissimilar. However, sometimes data is not structured as objects-by-variables, but rather as relational data, which is an $n \times n$ matrix $\mathbf{R}$ of pairwise distances or dissimilarities $R_{j,k}$ between the $j$th and the $k$th objects. For instance, relational data is common in bioinformatics, when observations are alignment scores between DNA sequences of varying lengths \cite{maji2008rough}, and in social network analysis, where nodes may exhibit partial membership in multiple overlapping communities represented through pairwise relations \cite{nepusz2008fuzzy}.

The focus of the research here is clustering relational data. Clustering algorithms can be categorised into two groups based on their output: hard or {\it crisp} clustering, where each data point is assigned to exactly one cluster, and soft or {\it fuzzy} clustering, where objects may belong to more than one cluster. In fact, hard clustering is a special case of fuzzy clustering. 

In this paper, we focus on Fuzzy Relational Clustering (FRC). Several approaches for FRC have been proposed. One of the first FRC algorithms to appear was Relational Fuzzy $c$-Means (RFCM) \cite{hathaway1989relational}, which assumes a Euclidean structure for the input distance matrix. This was later extended to Non-Euclidean Relational Fuzzy $c$-Means (NERFCM) \cite{hathaway1994nerf}, which applies a mathematical transformation to convert arbitrary dissimilarities into Euclidean distances so that Fuzzy $c$-Means (FCM) \cite{Dunn01011973,bezdek1984fcm} can be applied. However, because these approaches require Euclidean geometry, they can distort the original dissimilarity structure. An alternative approach is the FANNY algorithm \cite{kaufman2009finding}. FANNY operates directly on the raw dissimilarity matrix by minimising a relational double-sum objective. This allows FANNY to handle any arbitrary dissimilarity function without modifying the original data or computing explicit cluster centres.

The degree of fuzziness in FRC algorithms is typically controlled by a fuzzifier function, which is parametrised by a constant denoted by $m$. Bezdek et al. \cite{bezdek1984fcm} empirically observed that $1.5 \leq m \leq 3$ generally yields good results for the FCM algorithm. This heuristic selection of $m$ was later refined to the interval $[1.5, 2.5]$ \cite{pal1995cluster}. While numerous studies have attempted to establish guidelines for optimal fuzzifier selection (summarised in \cite{gupta2019fuzzy}), the majority of these recommendations remain heavily reliant on heuristic arguments. To the best of our knowledge, the first rigorous theoretical link between the fuzzifier value and the underlying data geometry was established by Yu et al. \cite{yu2004analysis}. They demonstrated that FCM yields meaningful fuzzy partitions only when $m$ is constrained below a specific theoretical bound. However, their derivation inherently assumes explicit Euclidean coordinates and is limited to the classical, object-based FCM framework.

In this paper, we extend the work of Yu et al. to FRC algorithms that use distances defined in a Hilbert space. This is achieved by using kernel functions to define dissimilarities between objects. Our first contribution is a theoretical analysis of the dependence of the stability of generalised FRC algorithms to the fuzzifier function and its parameter. This generalises the result of Yu et al. and yields an upper bound for the fuzzifier parameter that prevents the algorithm from collapsing to the uniform solution. Based on this analysis, our second contribution is a bandwidth selection algorithm that allows arbitrarily large values of the fuzzifier parameter. Our third contribution is a novel fuzzifier function that presents theoretical advantages over the commonly used power fuzzifier.

The remainder of the paper is organised as follows: Section~\ref{sec:frc} formulates the general FRC objective, introduces Kernel FRC, and defines the uniform collapse phenomenon. Section~\ref{sec:stability} presents a stability analysis to derive the conditions under which collapse is avoided. Based on these conditions, we introduce a novel fuzzifier function. Section~\ref{sec:bw_select_alg} details the proposed bandwidth selection algorithm, and Section~\ref{sec:simulations} evaluates its performance on both synthetic and real-world data sets. We finally make some concluding remarks in Section~\ref{sec:conclusion}.

\section{Fuzzy Relational Clustering}\label{sec:frc}

% The paradigm of fuzzy clustering, pioneered by Ruspini \cite{ruspini1969new} based on ideas from Zadeh \cite{zadeh1965fuzzy}, transformed pattern recognition by allowing observations to possess continuous degrees of membership across multiple clusters. While popular algorithms such as Fuzzy $c$-Means (FCM) \cite{Dunn01011973} rely on vector representations in a metric space, many modern applications require clustering based solely on pairwise relational data. The extension of fuzzy logic to non-metric domains has driven decades of theoretical advancement, from early non-metric models \cite{roubens1978pattern} to robust relational frameworks \cite{hathaway1989relational, dave2002robust}. In this Section, we formalise the Fuzzy Relational Clustering (FRC) framework and introduce Kernel Fuzzy Relational Clustering (KFRC). We finally explore the main vulnerability of fuzzy clustering algorithms, known as `uniform collapse'.

% \subsection{The fuzzy relational objective}\label{subsec:frc}

In the most general setting, the problem of FRC for clustering a data set $\mathbf{X} \in \mathbb{R}^{n\times p}$ with dissimilarities $\mathbf{R}\in\mathbb{R}_{\ge 0}^{n\times n}$ into $c > 1$ clusters is formulated as the minimisation of an objective given by
\begin{equation}\label{eq:frc_general}
    J_\text{FRC}(\mathbf{U}) = \sum\limits_{i=1}^c \frac{\sum\limits_{j=1}^n \sum\limits_{k=1}^n t_m(u_{ij}) t_m(u_{ik}) R_{jk}}{2 \sum\limits_{l=1}^n t_m(u_{il})},
\end{equation}
where $u_{ij} \geq 0 \ \forall i,j$ subject to the constraint $\sum_{i=1}^cu_{ij} = 1 \ \forall j$, and $t_m:[0,1] \rightarrow [0,1]$ is the fuzzifier function. The matrix $\mathbf{U}$ is an element of the space of fuzzy partitions, given by
\begin{equation*}
    M_{fcn} \defeq \Big\{ \mathbf{U} \in \mathbb{R}^{c \times n} \ \Big| \  u_{ij} \in [0, 1] \ \forall i, j \wedge \sum_{i=1}^c u_{ij} = 1 \quad \forall j \in \{1, \dots, n\} \Big\}.
\end{equation*}
In a relational context, the elements of $\mathbf{R}$ are defined as $R_{jk} = d(\mathbf{x}_j, \mathbf{x}_k)$, where $d: A \times A \rightarrow \mathbb{R}_{\ge 0}$ is a dissimilarity function over a set $A$, for which the following three properties hold
\begin{align*}
d(x,y) \geq 0 \ \forall x,y \in A \quad & \text{(Non-negativity)}\\
d(x,y) = d(y, x) \ \forall x,y \in A \quad  & \text{(Symmetry)}\\
d(x,y) = 0 \iff x = y \quad & \text{(Identity of indiscernibles)}
\end{align*}
Distance or metric functions are a subset of the broader class of dissimilarities, which additionally require the triangle inequality to hold. 

Operating alongside these relational affinities $R_{jk}$, the objective function in Expression \eqref{eq:frc_general} employs a fuzzifier function $t_m$ to weigh membership assignments. This function is parametrised by the fuzzifier $m \in [m^\text{min}, m^\text{max}]$ which controls the fuzziness of the solution and whose minimum $m^\text{min}$ and maximum values $m^\text{max}$ depend on the definition of $t_m$. Larger values of $m$ yield a fuzzier partition, and $m=m^\text{min}$ produces a hard partition. Based on \cite{klawonn2003fuzzy}, fuzzifier functions must satisfy the following three properties
\begin{align*}
    t_m(0) = 0, \ t_m(1) = 1 \ \forall m & \text{ (Boundary guarding)}\\ 
    t_m(u) \in \mathcal{C}^1(0,1), \  t'_m(u) > 0 \ \forall u \in (0,1)  & \text{ (Monotonicity)}\\
    t_m(u) \in \mathcal{C}^2(0,1), \  t''_m(u) >0 \ \forall u \in (0, 1) & \text{ (Convexity)}
\end{align*}
An additional desirable (but not necessary) property for fuzzifier functions is origin sparsity, which requires $\lim_{u \rightarrow 0^+}t'_m(u) > 0$ for any finite value of $m$. Origin sparsity, also known as the `high contrast' property \cite{rousseeuw1995fuzzy}, allows for hard partitions to be obtained when clusters are well-separated, or, more generally, for hard and soft partitions to coexist even for $m > m^\text{min}$. 

While the formulation of the FRC objective in Expression \eqref{eq:frc_general} allows for the use of any arbitrary dissimilarity function, it retains a theoretical connection to object-based algorithms like FCM, which seeks to minimise the generalised least-squares functional $J_\text{FCM}$:
\begin{equation}\label{eq:fcm_objective}
    J_\text{FCM}(\mathbf{U}, \mathbf{V}) = \sum\limits_{j=1}^n\sum\limits_{i=1}^c u_{ij}^m \| \mathbf{x}_j - \mathbf{v}_i\|^2,
\end{equation}
where $\mathbf{V} \in \mathbb{R}^{c \times p}$ is the matrix of cluster centres $\mathbf{v}_i$. Expression \eqref{eq:fcm_objective} can be seen as a special case of $J_\text{FRC}$ when the dissimilarity function corresponds to a squared Euclidean distance $d(\mathbf{x}_j, \mathbf{x}_k) = \|\phi(\mathbf{x}_j) - \phi(\mathbf{x}_k)\|_{\mathcal{H}}^2$ for a feature map $\phi : \mathcal{X} \rightarrow \mathcal{H}$, where $\mathcal{H}$ is a Hilbert space. In this formulation, originally shown by Tucker \cite{tucker1987counterexamples}, the algorithm implicitly minimises the squared distances between the mapped observations and a set of cluster centroids residing in $\mathcal{H}$. This formulation shows that the relational objective becomes mathematically equivalent to an object-by-variable objective. This also explains why the object-based fuzzy clustering objectives, such as Expression \eqref{eq:fcm_objective}, are formulated as an optimisation problem over the fuzzy memberships $\mathbf{U}$ and the cluster centres $\mathbf{V}$; the latter is implicit in the formulation of general FRC objective functions.

\subsection{Kernel Fuzzy Relational Clustering (KFRC)}\label{subsec:kfrc}

The equivalence between FRC and object-based fuzzy clustering algorithms provides a means to explicitly control the geometric embedding of relational data through kernel functions. Since the equivalence requires the dissimilarities to represent squared Euclidean distances in an inner-product space, we can mathematically guarantee this holds by deriving the relational matrix from a symmetric, positive semi-definite (PSD) kernel function. The squared Euclidean distance between any two mapped observations in $\mathcal{H}$ can be computed entirely through the kernel trick \cite{aizerman1964theoretical} by
\begin{equation*}
    d_k(\mathbf{x}_j, \mathbf{x}_k) = \|\phi(\mathbf{x}_j) - \phi(\mathbf{x}_k)\|_{\mathcal{H}}^2 = k(\mathbf{x}_j, \mathbf{x}_j) + k(\mathbf{x}_k, \mathbf{x}_k) - 2k(\mathbf{x}_j, \mathbf{x}_k),
\end{equation*}
where $k: \mathcal{X} \times \mathcal{X} \rightarrow \mathbb{R}$ is a kernel function such that $k(\mathbf{x}_j, \mathbf{x}_k)$ is an inner product of $\phi(\mathbf{x}_j)$ and $\phi(\mathbf{x}_k)$ by $\langle \phi(\mathbf{x}_j), \phi(\mathbf{x}_k) \rangle_{\mathcal{H}} = k(\mathbf{x}_j, \mathbf{x}_k)$. Kernels are symmetric PSD if and only if $k(\mathbf{x}_j, \mathbf{x}_k)  = k(\mathbf{x}_k, \mathbf{x}_j) \ \forall \mathbf{x}_j, \mathbf{x}_k \in \mathcal{X}$ and $\sum_{j=1}^n \sum_{k=1}^n \alpha_j\alpha_k k(\mathbf{x}_j, \mathbf{x}_k) \geq 0$ for all $\mathbf{x}_{j}, \mathbf{x}_k \in \mathcal{X}$ and for all $\alpha_1, \ldots, \alpha_n \in \mathbb{R}$. The rationale behind the use of symmetric PSD kernel functions is that they define a unique Reproducing Kernel Hilbert Space (RKHS) $\mathcal{H}$ and an associated feature map $\phi: \mathcal{X} \rightarrow \mathcal{H}$. This mapping ensures that the inner product between any two mapped points within $\mathcal{H}$ can be evaluated exactly by the kernel function in the original input space. 

By using the kernel trick, we can control the geometry of the induced space. The usage of kernel functions enables clustering of complex data structures and provides a straightforward way to induce feature weighting via bandwidth selection. Substituting this kernel-induced distance $d_k$ into $J_\text{FRC}$, we obtain the Kernel Fuzzy Relational Clustering (KFRC) objective. The aggregation of multiple kernels for similarity quantification is a common approach in the clustering literature, typically achieved through either sums \cite{ghashti2025mixed} or products \cite{costa_deterministic_2025} of kernels. Both are mathematically valid, as the class of symmetric PSD kernels is closed under addition and multiplication. Geometrically, however, they represent fundamentally different embeddings: the sum kernel corresponds to a concatenation of the underlying feature maps, whereas the product kernel induces a tensor product space. Despite these distinct geometric implications, both methods are used in practice and our experiments in Section \ref{sec:simulations} do not reveal any consistent empirical advantage of one combination method over the other.

\subsection{Uniform collapse in fuzzy clustering}\label{subsec:unif_solution}

A fundamental challenge in clustering, and more impactfully in fuzzy clustering, lies in guiding the algorithm from a highly non-informative initialisation to a structured partition. To bypass this difficulty, many software implementations use multiple random initialisations of $\mathbf{U}$ and retain the solution that minimises the fuzzy clustering objective. However, this approach introduces bias as these random initialisations typically favour the allocation of observations to specific clusters.

A more rigorous approach is to start from a near-uniform state, assume almost no prior knowledge of the fuzzy membership matrix $\mathbf{U}$, and let the structure emerge. However, this carries a major risk: the algorithm may fail to find any meaningful clusters and instead collapse into a state where every observation is assigned an equal probability of belonging to every cluster. This phenomenon is referred to as {\it uniform collapse}, since the clustering ``collapses" to the undesirable uniform solution $(\mathbf{U}^*, \mathbf{V}^*)$, where $\mathbf{U}^* = \mathbf{1}_c \mathbf{1}_n^\top/c \in \mathbb{R}^{c \times n}$, $\mathbf{1}_d$ are uniform membership probabilities, and $\mathbf{V}^* = \bar{\phi}(\mathbf{x})\mathbf{1}_c^\top \in \mathbb{R}^{p \times c}$ with $\bar{\phi}(\mathbf{x}) = \sum_{i=1}^n \phi(\mathbf{x}_i)/n$ is the global centre of mass of the data embedded in $\mathcal{H}$.

It turns out there is an interesting connection between the occurrence of the uniform solution and the fuzzifier $m$, which is investigated in detail in Section \ref{sec:stability}. Sufficiently large values of $m$ are mathematically expected to produce maximal fuzziness, naturally driving the membership degrees toward $1/c$ for all observations across all clusters. This is not surprising given that $m$ controls the degree of fuzziness. However, a robust fuzzy clustering algorithm must approach this uniform state smoothly as $m$ increases asymptotically. If the algorithm abruptly falls into this state at finite but sufficiently small values of $m$, we say it has structurally collapsed to the uniform solution. In Figure~\ref{fig:unif_collapse_iris}, we illustrate this phenomenon of uniform collapse using the Iris data set and KFRC with the Gaussian kernel, suitably selected bandwidth values, and the power fuzzifier function.
\begin{figure}[!ht]
    \centering
    \includegraphics[width=\linewidth]{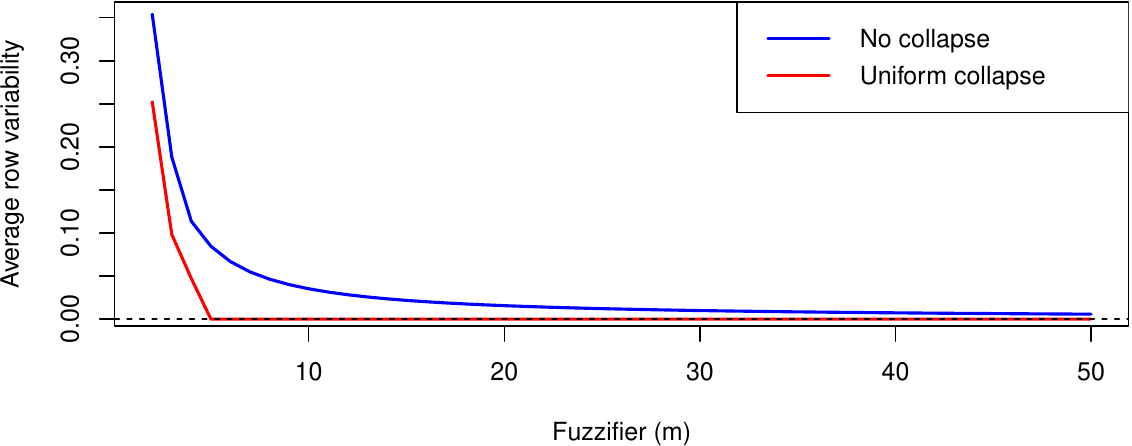}
    \caption{A comparison of a reasonable clustering and uniform collapse on the Iris data set using KFRC. Uniform collapse yields an abrupt drop of the average row variability, defined as the average standard deviation of the values across all rows of the membership matrix $\mathbf{U}$, to zero (red line), whereas no collapse enables the safe use of large values of the fuzzifier $m$ (blue line).}
    \label{fig:unif_collapse_iris}
\end{figure}

As shown in Figure~\ref{fig:unif_collapse_iris}, a poor selection of bandwidth values can lead to the uniform solution dominating for finite values of $m$, whereas the use of carefully tuned bandwidths mitigates the risk of uniform collapse, allowing the fuzziness to vary smoothly as the fuzzifier increases and removing the need for picking a value for the latter. We remark that uniform collapse was encountered for $m \geq 5$ in this example, which may be seen as an extreme choice for the value of the fuzzifier. However, this phenomenon may also occur for much smaller and commonly used fuzzifier values, in which case having a mechanism to avoid such a situation becomes crucial.

\section{Stability analysis}\label{sec:stability}

In this Section, we perform a stability analysis of the uniform solution $(\mathbf{U}^*, \mathbf{V}^*)$ defined in Section~\ref{subsec:unif_solution}. We will assume that we are performing FRC with a general fuzzifier function $t_m$ and a general dissimilarity matrix $\mathbf{R}$.

% \vspace{-0.55cm}

\subsection{The uniform solution as a fixed point}

We note that Expression~\eqref{eq:frc_general} can be equivalently written as
\begin{align*}
    J_\text{FRC}(\mathbf{U}) = \sum\limits_{i=1}^c \frac{Q_i}{2S_i},
\end{align*}
where we define $S_i \defeq \sum_{l=1}^n t_m(u_{il})$ and $Q_i \defeq \sum_{j=1}^n \sum_{k=1}^n t_m(u_{ij})t_m(u_{ik})R_{jk}$. Moreover, we set $T_{ij} \defeq \sum_{k=1}^n t_m(u_{ik})R_{jk}$. Minimisation of $J_\text{FRC}$ is equivalent to minimising the Lagrangian 
\begin{equation}\label{eq:kfrc_lagrangian}
    \mathcal{L}_\text{FRC} = \sum\limits_{i=1}^c \frac{Q_i}{2S_i} - \sum\limits_{j=1}^n \lambda_j\left( \sum\limits_{i=1}^c u_{ij} - 1\right).
\end{equation}
Differentiating Expression \eqref{eq:kfrc_lagrangian} with respect to $u_{ij}$ and setting the derivative to zero yields $u_{ij} = (t'_m)^{-1}(2\lambda_j/A_{ij})$, with $A_{ij} \defeq 2T_{ij}/S_i - Q_i/S_i^2$ and the condition $\sum_{i=1}^c u_{ij}=1$.

\begin{theorem}\label{thm:fixed_point}
    The uniform solution $(\mathbf{U}^*, \mathbf{V}^*)$ is a fixed point of the FRC algorithm.
\end{theorem}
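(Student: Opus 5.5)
We need to show that plugging $\mathbf{U}^* = \mathbf{1}_c\mathbf{1}_n^\top/c$ into the update map returns $\mathbf{U}^*$, and that the implicit centres coincide with $\mathbf{V}^*$. The plan is to evaluate every auxiliary quantity at $\mathbf{U}^*$, observe that none depends on the cluster index $i$, and then use the stationarity equation $u_{ij} = (t'_m)^{-1}(2\lambda_j/A_{ij})$ together with the constraint $\sum_i u_{ij}=1$.

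First, at $u_{ij}=1/c$ for all $i,j$ we have $t_m(u_{ij}) = t_m(1/c) =: \tau$. Since $t_m$ is strictly increasing with $t_m(0)=0$, we have $\tau>0$. This gives
\[
S_i = n\tau, \qquad Q_i = \tau^2\sum_{j,k}R_{jk}, \qquad T_{ij} = \tau\sum_{k}R_{jk} =: \tau r_j.
\]
Hence
\[
A_{ij} = \frac{2r_j}{n} - \frac{\sum_{j,k}R_{jk}}{n^2} =: a_j,
\]
which depends only on $j$ and not on $i$.

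Second, the update for column $j$ reads $u_{ij} = (t'_m)^{-1}(2\lambda_j/a_j)$ for every $i$. The map $(t'_m)^{-1}$ is well defined because convexity makes $t'_m$ strictly increasing, hence injective. So all $u_{ij}$ in column $j$ are equal. The constraint then forces $u_{ij}=1/c$, which is achieved by choosing $\lambda_j = a_j t'_m(1/c)/2$. Thus the membership update maps $\mathbf{U}^*$ to itself.

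Third, for the centre part, use the Hilbert-space interpretation of Section~\ref{sec:frc}. The implicit centre of cluster $i$ is $\mathbf{v}_i = \sum_l t_m(u_{il})\phi(\mathbf{x}_l)/S_i$. At $\mathbf{U}^*$ this equals $\sum_l \tau\phi(\mathbf{x}_l)/(n\tau) = \bar\phi(\mathbf{x})$ for every $i$, which is exactly $\mathbf{V}^*$. Recomputing $\mathbf{U}$ from $\mathbf{V}^*$ gives identical distances to all centres, so again all memberships equal $1/c$. Therefore $(\mathbf{U}^*,\mathbf{V}^*)$ is reproduced under both half-steps.

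The main obstacle is the degenerate case $a_j = 0$, or $a_j$ of the wrong sign, where $2\lambda_j/a_j$ is ill-defined or falls outside the range of $t'_m$. I would handle this by noting that the stationarity condition in its undivided form, $t'_m(u_{ij})\,a_j/2 = \lambda_j$, is still satisfied by $u_{ij}=1/c$ with the choice of $\lambda_j$ above. I would then argue fixed-point status via the KKT conditions rather than via the inverse formula.
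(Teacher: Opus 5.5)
Your proposal is correct and follows essentially the same route as the paper. Like the paper, you evaluate $S_i$, $Q_i$, $T_{ij}$ at $\mathbf{U}^*$, observe that $A_{ij}$ reduces to a quantity $A_j^*$ independent of $i$, and choose $\lambda_j = A_j^* t'_m(1/c)/2$ to satisfy the first-order conditions; your extra steps are sound refinements the paper leaves implicit, namely injectivity of $t'_m$ forcing the updated column to be uniform, positivity of $t_m(1/c)$, the check that the implicit centres equal $\bar{\phi}(\mathbf{x})$, and the remark on the degenerate case $A_j^*=0$.
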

\begin{proof}
    At the uniform solution, $u_{ij} = 1/c \ \forall i,j$, which yields
    \begin{equation*}
        Q_i = t_m^2(1/c)\sum\limits_{l=1}^n\sum\limits_{k=1}^n R_{lk}, \quad 
        S_i = nt_m(1/c), \quad
        T_{ij} = t_m(1/c) \sum\limits_{k=1}^n R_{jk}.
    \end{equation*}
    Hence, $A_{ij} = 2\sum_{k=1}^n R_{jk}/n -  \sum_{l=1}^n\sum_{k=1}^n R_{lk}/n^2$, which is independent of the cluster index $i$ and can be equivalently denoted as $A^*_j$. The first-order condition derived above gives $t'_m(1/c) = 2\lambda_j/A_j^*$ at $\mathbf{U}^*$. Therefore, setting $\lambda_j = A^*_jt'_m(1/c)/2$ proves that the stationary point of the objective function coincides with the update of FRC and thus, the uniform solution is a fixed point of the algorithm.
\end{proof}

\subsection{Stability conditions for the uniform solution}

Theorem~\ref{thm:fixed_point} proves that the uniform solution $(\mathbf{U}^*, \mathbf{V}^*)$ is always a fixed point of the FRC algorithm. Consequently, uniform initialisation will leave the algorithm trapped in this state, providing no informative insight into the underlying cluster structure. To investigate the conditions under which the algorithm successfully avoids this situation, we perform a local stability analysis. Our objective is to derive the threshold at which the uniform solution loses stability (i.e., a bifurcation point), which we formalise in the following theorem.
\begin{theorem}\label{thm:kfrc_stability_cond}
    The uniform solution $(\mathbf{U}^*, \mathbf{V}^*)$ loses stability and becomes a saddle point of the FRC algorithm when $\lambda_\text{min}(\mathbf{Q}\mid\boldsymbol{1}_n^\perp) <0$, where $\lambda_\text{min}(\mathbf{Q}\mid\boldsymbol{1}_n^\perp)$ is the smallest eigenvalue of the matrix $\mathbf{Q}$ restricted to the orthogonal complement of $\operatorname{span}\{\boldsymbol{1}_n\}$ (denoted by $\boldsymbol{1}_n^\perp$), with
    \begin{align*}
        \mathbf{Q} & =\frac{[\chi(c)]^2}{2n\zeta(c)}\mathbf{R} + \frac{\psi(c)}{2n}\mathrm{diag}(\mathbf{r}) - \frac{\psi(c) R_{\bullet \bullet}}{4n^2}\mathbf{I}_n,\\
        \mathbf{r} & = (R_{1\bullet}, \ldots, R_{n\bullet})^\top, \
        R_{j\bullet} = \sum\limits_{k=1}^n R_{jk}, \ R_{\bullet \bullet} = \sum\limits_{j=1}^n\sum\limits_{k=1}^n R_{jk}, \\
        \zeta(c) & = t_m\left(\frac{1}{c}\right), \ \chi(c) = t_m'\left( \frac{1}{c} \right), \ \psi(c) = t''_m\left(\frac{1}{c} \right).
    \end{align*}
\end{theorem}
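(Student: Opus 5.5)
We will analyse the second-order behaviour of $J_\text{FRC}$ at $\mathbf{U}^*$ restricted to the tangent space of $M_{fcn}$. Admissible perturbations are $\mathbf{U} = \mathbf{U}^* + \varepsilon\mathbf{E}$ with $\sum_{i=1}^c e_{ij} = 0$ for every $j$. Theorem~\ref{thm:fixed_point} shows that $\mathbf{U}^*$ is stationary, so the first-order term vanishes on this space. Stability is therefore decided by the sign of the second variation $\frac{d^2}{d\varepsilon^2}J_\text{FRC}(\mathbf{U}^*+\varepsilon\mathbf{E})\big|_{\varepsilon=0}$. The uniform solution is a saddle as soon as some admissible $\mathbf{E}$ makes this quadratic form negative, since the constraints rule out directions that leave $M_{fcn}$.

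\emph{Step 1: Taylor expansion.} Write $t_m(1/c+\varepsilon e) = \zeta + \varepsilon\chi e + \tfrac12\varepsilon^2\psi e^2 + O(\varepsilon^3)$. Let $\mathbf{e}_i\in\mathbb{R}^n$ denote the $i$th row of $\mathbf{E}$. Then
\begin{equation*}
S_i = n\zeta + \varepsilon\chi\,\mathbf{1}^\top\mathbf{e}_i + \tfrac12\varepsilon^2\psi\,\mathbf{1}^\top(\mathbf{e}_i\circ\mathbf{e}_i) + O(\varepsilon^3),
\end{equation*}
and $Q_i$ is expanded from the quadratic form $\mathbf{w}_i^\top\mathbf{R}\mathbf{w}_i$ with $\mathbf{w}_i = \zeta\mathbf{1} + \varepsilon\chi\mathbf{e}_i + \tfrac12\varepsilon^2\psi\,\mathbf{e}_i\circ\mathbf{e}_i$. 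Expanding $Q_i/(2S_i)$ to order $\varepsilon^2$ produces three kinds of term:
\begin{itemize}
\item a pure $\chi^2\mathbf{e}_i^\top\mathbf{R}\mathbf{e}_i/(2n\zeta)$ term;
\item $\psi$-terms, namely $\psi\zeta\,\mathbf{r}^\top(\mathbf{e}_i\circ\mathbf{e}_i)/(2n\zeta) = \tfrac{\psi}{2n}\mathbf{e}_i^\top\mathrm{diag}(\mathbf{r})\mathbf{e}_i$ from $Q_i$, and $-\tfrac{\psi R_{\bullet\bullet}}{4n^2}\|\mathbf{e}_i\|^2$ from the $S_i$ correction;
\item cross terms proportional to $\mathbf{1}^\top\mathbf{e}_i$, which couple the first-order parts of $Q_i$ and $S_i$.
\end{itemize}

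\emph{Step 2: the constraint kills the coupling terms.} The key point is that the $\mathbf{1}^\top\mathbf{e}_i$ terms are linear in $\mathbf{e}_i$ along the $\mathbf{1}_n$ direction. This step needs the most care. We decompose each $\mathbf{e}_i = a_i\mathbf{1}_n + \mathbf{f}_i$ with $\mathbf{f}_i\in\mathbf{1}_n^\perp$. It then suffices to exhibit a destabilising direction inside $\mathbf{1}_n^\perp$, and conversely to argue that the $\mathbf{1}_n$ component does not lower the form. For the sufficiency direction claimed in the theorem we choose $\mathbf{e}_1 = \mathbf{f}$, $\mathbf{e}_2=-\mathbf{f}$ and $\mathbf{e}_i=0$ otherwise, where $\mathbf{f}\in\mathbf{1}_n^\perp$ is an eigenvector of $\mathbf{Q}$ restricted to $\mathbf{1}_n^\perp$ for $\lambda_\text{min}$. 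This $\mathbf{E}$ satisfies the column constraint because $\mathbf{e}_1+\mathbf{e}_2=0$. Since $\mathbf{1}^\top\mathbf{e}_i=0$, all $S_i$ cross terms vanish, and the first-order part of $S_i$ disappears entirely.

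\emph{Step 3: assemble.} With this choice, the second-order coefficient becomes $\sum_i \mathbf{e}_i^\top\mathbf{Q}\mathbf{e}_i = 2\,\mathbf{f}^\top\mathbf{Q}\mathbf{f} = 2\lambda_\text{min}\|\mathbf{f}\|^2$, up to the positive factor $\varepsilon^2$. When this is negative, $J_\text{FRC}$ strictly decreases along an admissible direction. Meanwhile, in directions that are not destabilising (e.g.\ $\lambda\ge0$), the fixed point is not a descent direction, so $\mathbf{U}^*$ is a saddle. The main obstacle is the bookkeeping in Step 1: every $\varepsilon^2$ contribution, including the $-Q_i^{(0)}S_i^{(2)}/S_i^{(0)2}$ term, must be tracked so that it matches exactly the three coefficients of $\mathbf{Q}$, using $Q_i^{(0)}=\zeta^2R_{\bullet\bullet}$ and $S_i^{(0)}=n\zeta$. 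I would verify this by first checking the power fuzzifier with $c=2$ against Yu et al.
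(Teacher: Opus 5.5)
Your proposal is correct and is essentially the paper's proof. You use the same second-order expansion of $S_i$, $Q_i$ and $1/S_i$ about $\mathbf{U}^*$ over perturbations with zero row sums, and your choice $\mathbf{e}_1=\mathbf{f}$, $\mathbf{e}_2=-\mathbf{f}$ is the paper's factorisation $\delta_{ij}=\alpha_i\beta_j$ with $\boldsymbol{\alpha}\propto(1,-1,0,\ldots,0)^\top$ and $\boldsymbol{\beta}=\mathbf{f}$.

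Two side remarks need fixing, though neither affects the statement:

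\emph{The saddle claim.} Your sentence about ``directions that are not destabilising'' does not produce an ascent direction, and the paper does not either. The direction $\mathbf{e}_i=a_i\mathbf{1}_n$ with $\sum_i a_i=0$ supplies one. Along it, exactly, $J_\text{FRC}=\frac{R_{\bullet\bullet}}{2n}\sum_i t_m(1/c+\varepsilon a_i)$, which is strictly convex in $\varepsilon$ with zero slope at $\varepsilon=0$ (for $R_{\bullet\bullet}>0$).

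\emph{The ``conversely'' aside.} Your remark that the $\mathbf{1}_n$ component ``does not lower the form'' is false in general. The mixed second-order term $\frac{\psi(c)}{n}a_i\,\mathbf{r}^\top\mathbf{f}_i$ can be negative when the row sums of $\mathbf{R}$ differ, so working in $\mathbf{1}_n^\perp$ gives only a sufficient condition for instability. That is all the theorem asserts, so your proof of the statement is unaffected.
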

\begin{proof}
    Suppose that we initialise FRC from a perturbed uniform solution such that $u^{(0)}_{ij} = 1/c + \epsilon\delta_{ij}$, where $\epsilon > 0$ is small, $\sum_{i=1}^c\delta_{ij} = 0$ to ensure $\sum_{i=1}^c u^{(0)}_{ij} = 1$ for all $j$ and $\sum_{j=1}^n\delta_{ij} = 0$ to ensure that the total membership mass of each cluster is $n/c$. A Taylor expansion for $t_m$ around $u^{(0)}_{ij}$ gives
    \begin{equation*}
        t_m\left(u^{(0)}_{ij}\right) = \zeta(c) + \chi(c) \epsilon \delta_{ij} + \frac{\epsilon^2\delta^2_{ij}}{2}\psi(c) + \mathcal{O}(\epsilon^3).
    \end{equation*}
    Therefore, we get the following expressions 
    \begin{align*}
        S_i & = n\zeta(c) + \frac{\epsilon^2}{2}\psi(c) \sum\limits_{j=1}^n \delta^2_{ij},\\
        Q_i & = [\zeta(c)]^2R_{\bullet \bullet} + 2\epsilon\chi(c)\zeta(c) \sum\limits_{j=1}^n \delta_{ij}R_{j\bullet} + \\
    & + \epsilon^2\left\{[\chi(c)]^2 \sum\limits_{j=1}^n\sum\limits_{k=1}^n \delta_{ij}\delta_{ik}R_{jk} + \zeta(c) \psi(c) \sum\limits_{j=1}^n \delta_{ij}^2R_{j \bullet} \right\} + \mathcal{O}(\epsilon^3).
    \end{align*}
    Since $\epsilon > 0$ is small, we can expand $1/S_i$ as
    \begin{equation*}
        \frac{1}{S_i} = \frac{1}{n\zeta(c)}\left(1 - \frac{\epsilon^2\psi(c) \sum\limits_{j=1}^n \delta^2_{ij}}{2n\zeta(c)} + \mathcal{O}(\epsilon^3) \right).
    \end{equation*}
    Substituting the above to $J_\text{FRC}(\mathbf{U}) = \sum_{i=1}^c Q_i/(2S_i)$, and collecting all second-order terms for $\epsilon$ yields the following term for the Hessian
    \begin{equation}\label{eq:JFRC}
        J_\text{FRC}^{(2)} = \frac{[\chi(c)]^2 \sum\limits_{i=1}^c\boldsymbol{\delta}_i^\top \mathbf{R} \boldsymbol{\delta}_i +\psi(c)\zeta(c)\sum\limits_{i=1}^c\sum\limits_{j=1}^n\delta^2_{ij}R_{j\bullet}}{2n\zeta(c)} - \frac{\zeta(c) R_{\bullet \bullet} \psi(c) \sum\limits_{i=1}^c\sum\limits_{j=1}^n \delta^2_{ij}}{4n^2\zeta(c)}.
    \end{equation}
    The inner summand in the second term of Expression \eqref{eq:JFRC} can be expressed as $\boldsymbol{\delta}_i^\top \mathrm{diag}(\mathbf{r})\boldsymbol{\delta}_i$, where $\mathbf{r} = (R_{1\bullet}, \ldots, R_{n\bullet})^\top$, and $\sum_{j=1}^n \delta_{ij}^2 = \| \boldsymbol{\delta}_i \|^2$. By further assuming the factorisation $\delta_{ij} = \alpha_i\beta_j$ with $\sum_{i=1}^c \alpha_i = 0$ and $\sum_{i=1}^c \alpha_i^2 = 1$ and $\boldsymbol{\beta} = (\beta_1, \ldots, \beta_n)^\top$ such that $\sum_{j=1}^n \beta_j = 0$, we can express $J_\text{FRC}^{(2)}$ as $\boldsymbol{\beta}^\top \mathbf{Q}\boldsymbol{\beta}$, where
    \begin{align*}
        \mathbf{Q} = \frac{[\chi(c)]^2}{2n\zeta(c)}\mathbf{R} + \frac{\psi(c)}{2n}\mathrm{diag}(\mathbf{r}) - \frac{\psi(c) R_{\bullet \bullet}}{4n^2}\mathbf{I}_n.
    \end{align*}
    For the uniform solution to be an unstable fixed point (i.e. a saddle point) we require that $\boldsymbol{\gamma}^\top \mathbf{Q}\boldsymbol{\gamma} < 0$ for some non-trivial $\boldsymbol{\gamma} \in \boldsymbol{1}_n^\perp$. Equivalently, by a direct application of the variational characterisation of eigenvalues and since $\mathbf{Q}$ is a symmetric matrix, the condition $\boldsymbol{\gamma}^\top \mathbf{Q}\boldsymbol{\gamma} < 0$ is equivalent to the minimum eigenvalue of $\mathbf{Q}$ restricted to the orthogonal complement of $\operatorname{span}\{\boldsymbol{1}_n\}$ (denoted by $\boldsymbol{1}_n^\perp$) being negative, or more succinctly
    \begin{equation*}
        \lambda_\text{min}(\mathbf{Q}\mid\boldsymbol{1}_n^\perp) < 0.
    \end{equation*}
\end{proof}
\begin{corollary}\label{cor:euclid_stability_cond}
    Let $\mathbf{K} \in \mathbb{R}^{n \times n}$ be a Gram matrix induced by a symmetric PSD kernel function $k: \mathcal{X} \times \mathcal{X} \rightarrow \mathbb{R}$, and let $\mathbf{R} \in \mathbb{R}^{n \times n}$ be the corresponding dissimilarity matrix with entries $R_{jk} = K_{jj} + K_{kk} - 2K_{jk}$. Then, the uniform solution $(\mathbf{U}^*, \mathbf{V}^*)$ loses stability when $\kappa(t_m, c)\lambda_\text{max}(\mathbf{F}) > 1,$ where $\kappa(t_m,c) = 2[\chi(c)]^2/[\zeta(c)\psi(c)]$ and
    \begin{equation*}
        \mathbf{F} = \frac{1}{n}\mathbf{D}^{-1/2}\bar{\mathbf{K}}\mathbf{D}^{-1/2}, \quad 
        \bar{\mathbf{K}} = \mathbf{H}\mathbf{K}\mathbf{H}, \quad \mathbf{H} = \mathbf{I}_n - \frac{1}{n}\mathbf{1}_n\mathbf{1}_n^\top, \quad \mathbf{D} = \mathrm{diag}(\bar{\mathbf{K}}).
    \end{equation*}
\end{corollary}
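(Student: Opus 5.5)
The plan is to specialise the quadratic form of Theorem~\ref{thm:kfrc_stability_cond} to kernel-induced dissimilarities. I will show that on $\boldsymbol{1}_n^\perp$ the matrix $\mathbf{Q}$ reduces to a combination of $\bar{\mathbf{K}}$ and $\mathbf{D}$. The sign condition $\lambda_\text{min}(\mathbf{Q}\mid\boldsymbol{1}_n^\perp)<0$ then becomes a generalised Rayleigh quotient condition, which I will rewrite through $\mathbf{F}$.

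\textbf{Step 1: the $\mathbf{R}$-term.} Write $\mathbf{k}=(K_{11},\ldots,K_{nn})^\top$, so that $\mathbf{R}=\mathbf{k}\boldsymbol{1}_n^\top+\boldsymbol{1}_n\mathbf{k}^\top-2\mathbf{K}$. For $\boldsymbol{\beta}\in\boldsymbol{1}_n^\perp$ the rank-one parts vanish, which gives $\boldsymbol{\beta}^\top\mathbf{R}\boldsymbol{\beta}=-2\boldsymbol{\beta}^\top\mathbf{K}\boldsymbol{\beta}$. Since $\mathbf{H}\boldsymbol{\beta}=\boldsymbol{\beta}$, this equals $-2\boldsymbol{\beta}^\top\bar{\mathbf{K}}\boldsymbol{\beta}$.

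\textbf{Step 2: the diagonal terms.} I compute
\begin{equation*}
R_{j\bullet}=nK_{jj}+\operatorname{tr}\mathbf{K}-2(\mathbf{K}\boldsymbol{1}_n)_j, \qquad R_{\bullet\bullet}=2n\operatorname{tr}\mathbf{K}-2\boldsymbol{1}_n^\top\mathbf{K}\boldsymbol{1}_n .
\end{equation*}
Then the diagonal part of $\mathbf{Q}$ has entries $\frac{\psi(c)}{2}\bigl(R_{j\bullet}/n-R_{\bullet\bullet}/(2n^2)\bigr)$. A direct check shows this bracket equals $K_{jj}-2(\mathbf{K}\boldsymbol{1}_n)_j/n+\boldsymbol{1}_n^\top\mathbf{K}\boldsymbol{1}_n/n^2=\bar K_{jj}$. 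This is the only real computation, and it is routine.

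\textbf{Step 3: the Rayleigh quotient.} Combining Steps 1 and 2, for every $\boldsymbol{\beta}\in\boldsymbol{1}_n^\perp$,
\begin{equation*}
\boldsymbol{\beta}^\top\mathbf{Q}\boldsymbol{\beta}=-\frac{[\chi(c)]^2}{n\zeta(c)}\boldsymbol{\beta}^\top\bar{\mathbf{K}}\boldsymbol{\beta}+\frac{\psi(c)}{2}\boldsymbol{\beta}^\top\mathbf{D}\boldsymbol{\beta}.
\end{equation*}
By convexity of $t_m$ we have $\psi(c)>0$, and $\zeta(c)>0$. I assume $\mathbf{D}$ is positive definite, i.e.\ no embedded point sits exactly at the centroid; otherwise I restrict to the support of $\mathbf{D}$. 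Under these conditions, negativity of the form for some $\boldsymbol{\beta}\in\boldsymbol{1}_n^\perp$ is equivalent to
\begin{equation*}
\kappa(t_m,c)\,\frac{\boldsymbol{\beta}^\top\bar{\mathbf{K}}\boldsymbol{\beta}}{n\,\boldsymbol{\beta}^\top\mathbf{D}\boldsymbol{\beta}}>1 .
\end{equation*}
The substitution $\boldsymbol{\gamma}=\mathbf{D}^{1/2}\boldsymbol{\beta}$ turns the quotient into $\boldsymbol{\gamma}^\top\mathbf{F}\boldsymbol{\gamma}/\boldsymbol{\gamma}^\top\boldsymbol{\gamma}$.

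\textbf{Main obstacle: matching the subspace.} The variational characterisation gives $\lambda_\text{max}(\mathbf{F})$ as the supremum over all of $\mathbb{R}^n$. The constraint $\boldsymbol{\beta}\in\boldsymbol{1}_n^\perp$, however, corresponds to $\boldsymbol{\gamma}\perp\mathbf{D}^{1/2}\boldsymbol{1}_n$.

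I would first use $\bar{\mathbf{K}}\boldsymbol{1}_n=\boldsymbol{0}$. A top eigenvector $\mathbf{v}$ of $\mathbf{D}^{-1}\bar{\mathbf{K}}$ with positive eigenvalue therefore satisfies $\boldsymbol{1}_n^\top\mathbf{D}\mathbf{v}=0$. This is $\mathbf{D}$-orthogonality rather than Euclidean orthogonality, so projecting $\mathbf{v}$ with $\mathbf{H}$ keeps the numerator but changes the denominator, and the two suprema need not agree when $\mathbf{D}$ is not a multiple of $\mathbf{I}_n$.

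To close this gap, I would revisit the proof of Theorem~\ref{thm:kfrc_stability_cond}. There, $\sum_j\delta_{ij}=0$ was imposed for convenience rather than by the membership constraints. I would replace it by the $\mathbf{D}$-weighted mass condition $\sum_j \bar K_{jj}\beta_j=0$, which is the natural constraint in the feature-space picture. I would also check that the first-order terms in $S_i$ and $Q_i$ still cancel after summing over $i$ with $\sum_i\alpha_i=0$. Under that constraint, the admissible $\boldsymbol{\gamma}$ are exactly those orthogonal to $\mathbf{D}^{1/2}\boldsymbol{1}_n$. This vector lies in $\ker\mathbf{F}$, so the constrained and unconstrained maxima of $\mathbf{F}$ coincide whenever $\lambda_\text{max}(\mathbf{F})>0$, which yields the condition $\kappa(t_m,c)\lambda_\text{max}(\mathbf{F})>1$.

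If this re-derivation does not go through, the fallback is to state the result with the top eigenvector of $\mathbf{F}$ taken on $(\mathbf{D}^{1/2}\boldsymbol{1}_n)^\perp$. In that case I would note that it agrees with $\lambda_\text{max}(\mathbf{F})$ when $\mathbf{D}$ is approximately scalar.
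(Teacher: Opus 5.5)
Your Steps 1--3 are correct, and your route differs from the paper's. The paper does not keep the $\mathbf{R}$ of the hypothesis. It passes to the normalised centred embedding and uses $\mathbf{R}=2\mathbf{1}_n\mathbf{1}_n^\top-2n\mathbf{F}$ in its place. It then asserts that $n\mathbf{F}$ has zero row sums, so that $\mathrm{diag}(\mathbf{r})=2n\mathbf{I}_n$, and reads off $\mathbf{Q}\mid\boldsymbol{1}_n^\perp=-\frac{[\chi(c)]^2}{\zeta(c)}\mathbf{F}+\frac{\psi(c)}{2}\mathbf{I}_n$.

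You work with the given $\mathbf{R}$ and obtain the pencil $(\bar{\mathbf{K}},\mathbf{D})$, so the $\mathbf{D}^{-1/2}$ scaling in $\mathbf{F}$ comes out of the $\psi(c)$-terms rather than being imposed. Your obstacle is real. Since $\mathbf{F}\mathbf{D}^{1/2}\mathbf{1}_n=\mathbf{0}$ rather than $\mathbf{F}\mathbf{1}_n=\mathbf{0}$, the zero-row-sum step holds when $\mathbf{D}$ is a multiple of $\mathbf{I}_n$ but not in general. Over $\boldsymbol{1}_n^\perp$ alone, the supremum of your quotient is in general strictly below $\lambda_\text{max}(\mathbf{F})$.

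There is one slip. $\boldsymbol\beta\perp\mathbf{1}_n$ corresponds to $\boldsymbol\gamma\perp\mathbf{D}^{-1/2}\mathbf{1}_n$, not $\mathbf{D}^{1/2}\mathbf{1}_n$. With your version there would be no obstacle at all. For the same reason, your fallback restricted to $(\mathbf{D}^{1/2}\mathbf{1}_n)^\perp$ is vacuous, since that vector lies in $\ker\mathbf{F}$; the honest fallback uses $(\mathbf{D}^{-1/2}\mathbf{1}_n)^\perp$.

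The unproven step is your closing one. Once $\mathbf{1}_n^\top\boldsymbol\beta\neq0$, neither $\mathbf{Q}$ nor your Step 3 formula is the second-order term, because both were derived with $S_i$ having no first-order part. For instance, $\boldsymbol\beta^\top\mathbf{Q}\boldsymbol\beta$ then carries $\frac{[\chi(c)]^2}{n\zeta(c)}\bigl[(\mathbf{k}^\top\boldsymbol\beta)(\mathbf{1}_n^\top\boldsymbol\beta)-\boldsymbol\beta^\top\mathbf{K}\boldsymbol\beta\bigr]$ instead of $-\frac{[\chi(c)]^2}{n\zeta(c)}\boldsymbol\beta^\top\bar{\mathbf{K}}\boldsymbol\beta$. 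Checking that first-order terms cancel is not enough. The term $\epsilon\chi(c)\alpha_i\mathbf{1}_n^\top\boldsymbol\beta$ in $S_i$ acts through its product with the first-order part of $Q_i$ and through its square in the expansion of $1/S_i$. It produces the extra second-order contribution
\[
\frac{[\chi(c)]^2}{2n\zeta(c)}\Bigl[-\frac{2}{n}(\mathbf{r}^\top\boldsymbol\beta)(\mathbf{1}_n^\top\boldsymbol\beta)+\frac{R_{\bullet\bullet}}{n^2}(\mathbf{1}_n^\top\boldsymbol\beta)^2\Bigr].
\]
Added to $\frac{[\chi(c)]^2}{2n\zeta(c)}\boldsymbol\beta^\top\mathbf{R}\boldsymbol\beta$, this gives $\frac{[\chi(c)]^2}{2n\zeta(c)}\boldsymbol\beta^\top\mathbf{H}\mathbf{R}\mathbf{H}\boldsymbol\beta=-\frac{[\chi(c)]^2}{n\zeta(c)}\boldsymbol\beta^\top\bar{\mathbf{K}}\boldsymbol\beta$. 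Your Step 2 is an entrywise identity, so the Step 3 form is the second-order term for every $\boldsymbol\beta\in\mathbb{R}^n$.

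With this computation you need no mass constraint. Membership in $M_{fcn}$ only requires $\sum_i\alpha_i=0$, and the first-order term vanishes for any $\boldsymbol\beta$. So the $\mathbf{D}$-weighted condition and its feature-space motivation can be dropped. In $\boldsymbol\gamma=\mathbf{D}^{1/2}\boldsymbol\beta$ the second-order term is $\boldsymbol\gamma^\top\bigl(-\frac{[\chi(c)]^2}{\zeta(c)}\mathbf{F}+\frac{\psi(c)}{2}\mathbf{I}_n\bigr)\boldsymbol\gamma$ on all of $\mathbb{R}^n$. This is negative for some $\boldsymbol\gamma$ exactly when $\kappa(t_m,c)\lambda_\text{max}(\mathbf{F})>1$, which is the corollary. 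Without this computation, using the Step 3 form on $\{\mathbf{1}_n^\top\mathbf{D}\boldsymbol\beta=0\}\not\subset\boldsymbol{1}_n^\perp$ is unjustified.
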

\begin{proof}
    This result follows directly from Theorem~\ref{thm:kfrc_stability_cond}. By defining $\mathbf{F} = \mathbf{D}^{-1/2}\bar{\mathbf{K}}\mathbf{D}^{-1/2}/n$, the matrix $\mathbf{K}_{cn} = n\mathbf{F}$ represents the kernel Gram matrix that has been double-centred and normalised to the unit hypersphere. By construction, its diagonal elements are equal to one, and its rows sum to zero. Operating in this normalised RKHS, the relational dissimilarities correspond to the squared distance $R_{jk} = 2 - 2K_{cn, jk}$. In matrix notation, this is $\mathbf{R} = 2\mathbf{1}_n\mathbf{1}_n^\top - 2\mathbf{K}_{cn}$. Consequently, the sum terms in Theorem~\ref{thm:kfrc_stability_cond} simplify to
    \begin{align*}
        r_j & = \sum\limits_{k=1}^n (2 - 2K_{cn, jk}) = 2n \implies \mathrm{diag}(\mathbf{r}) = 2n\mathbf{I}_n, \\
        R_{\bullet \bullet} & = \sum\limits_{j=1}^n r_j = 2n^2.
    \end{align*}
    Substituting these into the definition of the matrix $\mathbf{Q}$ yields
    \begin{align*}
        \mathbf{Q} & = \frac{[\chi(c)]^2}{2n\zeta(c)}(2\mathbf{1}_n\mathbf{1}_n^\top - 2\mathbf{K}_{cn}) + \frac{\psi(c)}{2n}(2n\mathbf{I}_n) - \frac{\psi(c)(2n^2)}{4n^2}\mathbf{I}_n \\
        & = \frac{[\chi(c)]^2}{n\zeta(c)}\mathbf{1}_n\mathbf{1}_n^\top - \frac{[\chi(c)]^2}{n\zeta(c)}\mathbf{K}_{cn} + \frac{\psi(c)}{2}\mathbf{I}_n.
    \end{align*}
    To evaluate stability, we restrict $\mathbf{Q}$ to the orthogonal complement $\boldsymbol{1}_n^\perp$. For any vector $\boldsymbol{\gamma} \in \boldsymbol{1}_n^\perp$, the term $\boldsymbol{\gamma}^\top (\mathbf{1}_n\mathbf{1}_n^\top) \boldsymbol{\gamma} = 0$. Thus, the restricted quadratic form depends only on $\mathbf{K}_{cn}$, such that
    \begin{equation*}
        \mathbf{Q} \mid \boldsymbol{1}_n^\perp = - \frac{[\chi(c)]^2}{n\zeta(c)}\mathbf{K}_{cn} + \frac{\psi(c)}{2}\mathbf{I}_n = - \frac{[\chi(c)]^2}{\zeta(c)}\mathbf{F} + \frac{\psi(c)}{2}\mathbf{I}_n.
    \end{equation*}
    The uniform solution loses stability when the minimum eigenvalue of this restricted matrix is strictly negative. Because the eigenvectors of $-\mathbf{F}$ are identical to those of $\mathbf{F}$ with negative eigenvalues, the minimum eigenvalue of the expression corresponds to the maximum eigenvalue $\lambda_\text{max}(\mathbf{F})$ of $\mathbf{F}$ where
    \begin{equation*}
        - \frac{[\chi(c)]^2}{\zeta(c)}\lambda_\text{max}(\mathbf{F}) + \frac{\psi(c)}{2} < 0.
    \end{equation*}
    Rearranging this inequality yields
    \begin{equation*}
        \left\{\frac{2[\chi(c)]^2}{\zeta(c)\psi(c)}\right\}\lambda_\text{max}(\mathbf{F}) > 1 \implies \kappa(t_m, 1/c)\lambda_\text{max}(\mathbf{F}) > 1,
    \end{equation*}
    where $\kappa(t_m, u) = \{2[t'_m(u)]^2\}/[t_m(u)t''_m(u)]$, as stated.
\end{proof}
Corollary~\ref{cor:euclid_stability_cond} is a generalisation of the rule derived in \cite{yu2004analysis}, where it was shown that the FCM algorithm does not collapse to the uniform solution as long as $\lambda_\text{max}(\mathbf{F}) \geq 0.5$ or when the fuzzifier is chosen so that $m < 1/(1-2\lambda_\text{max}(\mathbf{F}))$ if $\lambda_\text{max}(\mathbf{F}) < 0.5$. In order to see this, one simply needs to consider the linear kernel $k(\mathbf{x}, \mathbf{x}) = \mathbf{x}^\top \mathbf{x}$ to recover squared Euclidean distances and use the power fuzzifier $t_m(u) = u^m$.

This stability framework can be naturally extended to predict the fuzziness threshold at which distinct clusters begin to merge. Conceptually, a cluster merge can be viewed as a phase transition, mirroring the thermodynamic bifurcations examined by Rose \cite{rose2002deterministic} in the context of Deterministic Annealing. To approximate this critical value, we first execute FRC with a sufficiently low fuzzifier $m$ to ensure that all $c$ clusters are represented and harden the resulting membership probabilities. We then construct data sub-matrices for every pair of clusters and apply the stability condition from Theorem~\ref{thm:kfrc_stability_cond} locally, assuming $c = 2$. The smallest value of $m$ across all pairs that renders this local two-cluster uniform solution a stable fixed point is an upper bound for the fuzzifier value at which the first cluster merge occurs in the global system.

\subsection{A rule for avoiding uniform collapse}

Given the stability analysis conducted, we can obtain conditions for which the KFRC algorithm does not collapse to the uniform solution even for large values of the fuzzifier $m$. We focus specifically on symmetric PSD kernels for which we have an analytic expression for the stability condition (see Corollary~\ref{cor:euclid_stability_cond}). Assuming the fuzzifier parameter $m$ takes values in $[m^\text{min}, m^\text{max}]$, the uniform solution is never a stable fixed point for any valid $m$ if $\lambda_\text{max}(\mathbf{F}) \geq \lim_{m \rightarrow m^\text{max}} 1/\kappa(t_m, 1/c)$. This establishes a lower bound for the leading eigenvalue of $\mathbf{F}$ that depends on the choice of the fuzzifier function: for the power and exponential fuzzifiers, this bound is equal to 1/2 \cite{klawonn2003alternative}, whereas for the quadratic fuzzifier it is 1/4 \cite{klawonn2003fuzzy}. Because $\mathbf{F}$ is constructed via double-centring, its rank is at most $n-1$. Furthermore, the eigenvalues of $\mathbf{F}$ sum to one, which can be seen by
\begin{equation*}
    \sum_{i=1}^n \lambda_i(\mathbf{F}) = \text{Tr}(\mathbf{F}) = \frac{1}{n}\text{Tr}(\mathbf{D}^{-1/2}\bar{\mathbf{K}}\mathbf{D}^{-1/2}) = \frac{1}{n}\text{Tr}(\bar{\mathbf{K}}\mathbf{D}^{-1}) = \frac{1}{n}\sum\limits_{i=1}^n \frac{\bar{\mathbf{K}}_{{ii}}}{\mathbf{D}_{ii}} = 1.
\end{equation*}
Moreover, the entries of the matrix can be written as
\begin{equation*}
(\mathbf{F})_{i,j} = \frac{1}{n}\frac{\langle \tilde{\phi}(\mathbf{x}_i), \tilde{\phi}(\mathbf{x}_j) \rangle_{\mathcal{H}}}{\| \tilde{\phi}(\mathbf{x}_i)\|_{\mathcal{H}} \|\tilde{\phi}(\mathbf{x}_j)\|_{\mathcal{H}}} = \frac{1}{n}\cos(\omega_{ij}),
\end{equation*}
where $\tilde{\phi}(\mathbf{x}_i) = \phi(\mathbf{x}_i) - \bar{\phi}(\mathbf{x})$ is the centred projection of $\mathbf{x}_i$ in the RKHS, and $\omega_{ij}$ is the angle between the mapped points. By direct analogy to Kernel PCA, the leading eigenvalue of $\mathbf{F}$ is interpreted as the proportion of variance explained by the first principal component on the unit hypersphere defined in the kernel-induced feature space. The condition derived in Corollary~\ref{cor:euclid_stability_cond} is thus equivalent to requiring that the angular concentration along a single principal axis be large enough to ensure that the algorithm never converges to the uniform solution, regardless of the fuzzifier $m$. However, this geometric requirement exposes a critical limitation of existing fuzzifiers. If a dataset contains $c$ well-separated clusters, the centroids naturally span a $(c-1)$-dimensional subspace. Using the power or exponential fuzzifier necessitates that half of the geometric variability be explained by a single dimension. This restricts the remaining $c-2$ dimensions to share the remaining variance, effectively distorting the cluster allocation process for $c > 2$. To rectify this, we argue that an additional desirable property for a fuzzifier function, besides origin sparsity, is an asymptotic scaling of the stability threshold of
\begin{equation*}
        \lim\limits_{c \rightarrow \infty}\left[\lim\limits_{m \rightarrow m^\text{max}} \frac{1}{\kappa(t_m, 1/c)}\right] = 0.
\end{equation*}
This ensures that as the number of clusters increases, the required variance bound on the leading eigenvalue scales down appropriately, preventing artificial dimensional dominance. To meet this requirement, we introduce the complementary root fuzzifier function, $t_{\text{cr}}$, which satisfies all standard properties of fuzzifier functions alongside the desirable properties of origin sparsity and of asymptotic geometric stability
\begin{equation*}
    t_{\text{cr}}: [0,1] \rightarrow [0,1], \quad t_\text{cr}(u) = 1 - (1-u)^{1/m}, \quad m > 1.
\end{equation*}
A formal proof verifying that the complementary root fuzzifier satisfies all necessary and desirable conditions is provided in \ref{appen4}.

\section{Proposed bandwidth selection algorithm}\label{sec:bw_select_alg}

Building upon the theoretical guarantees established in Section~\ref{sec:stability}, we propose a two-stage bandwidth selection strategy. The primary objective of this approach is twofold. First, it ensures the selected bandwidths decouple the fuzzifier parameter from the risk of uniform collapse, thereby guaranteeing the uniform solution remains unstable. Second, it optimises a multi-cluster criterion designed to isolate relevant cluster structures by assigning larger bandwidth values to uninformative features, effectively smoothing them out. Throughout this Section, we assume the use of a symmetric PSD kernel as this allows us to directly leverage the analytic bound for the leading eigenvalue of matrix $\mathbf{F}$ derived in Corollary~\ref{cor:euclid_stability_cond} to mathematically prevent uniform collapse.

In order to implement both stages of this bandwidth selection strategy, we must navigate the continuous parameter space of the kernel bandwidths. Given the formulation of our objective criteria, we approach this as a continuous optimisation problem and employ multi-start gradient ascent. The viability of any gradient-based approach in this context hinges on the differentiability of the eigenvalue map with respect to the kernel parameters, a property we formalise below.

\begin{proposition}\label{prop:diff_ftheta}
    Let $\mathbf{F} \equiv \mathbf{F}(\boldsymbol{\theta}) = \mathbf{D}^{-1/2}\bar{\mathbf{K}}\mathbf{D}^{-1/2}/n$ with $\mathbf{D}, \bar{\mathbf{K}}$ defined as in Corollary~\ref{cor:euclid_stability_cond} and assume a symmetric PSD kernel is used with bandwidths $\boldsymbol{\theta} = (\theta_1, \ldots, \theta_d)^\intercal$. Then, the leading eigenvalue of $\mathbf{F}$ is differentiable with respect to $\boldsymbol{\theta}$.
\end{proposition}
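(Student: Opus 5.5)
The proof proceeds in two steps. First we show that $\boldsymbol{\theta}\mapsto\mathbf{F}(\boldsymbol{\theta})$ is a smooth matrix-valued map. Then we invoke standard perturbation theory for the eigenvalues of symmetric matrices.

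For the first step we assume that the kernel $k_{\boldsymbol{\theta}}(\mathbf{x},\mathbf{y})$ is $\mathcal{C}^1$ (indeed $\mathcal{C}^\infty$ for Gaussian-type kernels) in $\boldsymbol{\theta}$ on the open parameter domain $\theta_l>0$. This holds for the Gaussian, Laplacian and similar kernels. It also holds for their sums and products across variables, the two aggregation schemes discussed in Section~\ref{subsec:kfrc}. Hence every entry of $\mathbf{K}(\boldsymbol{\theta})$ is $\mathcal{C}^1$. The map $\bar{\mathbf{K}}=\mathbf{H}\mathbf{K}\mathbf{H}$ is linear in $\mathbf{K}$, so $\bar{\mathbf{K}}$ and $\mathbf{D}=\mathrm{diag}(\bar{\mathbf{K}})$ are $\mathcal{C}^1$ as well. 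Each diagonal entry satisfies
\[
D_{jj}=\|\tilde\phi(\mathbf{x}_j)\|_{\mathcal{H}}^2\ge 0.
\]
We need it to be strictly positive, which holds whenever no mapped point coincides with the centroid. This nondegeneracy is already implicit in the definition of $\mathbf{F}$ in Corollary~\ref{cor:euclid_stability_cond}, and we state it as a standing assumption. Since $x\mapsto x^{-1/2}$ is smooth on $(0,\infty)$, $\mathbf{D}^{-1/2}$ is $\mathcal{C}^1$. It follows that $\mathbf{F}$, a product of $\mathcal{C}^1$ matrices, is $\mathcal{C}^1$ and symmetric.

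For the second step, let $\lambda_{\max}(\boldsymbol{\theta})$ be the largest eigenvalue of $\mathbf{F}(\boldsymbol{\theta})$. Suppose it is simple at $\boldsymbol{\theta}_0$. Then the implicit function theorem applies to $\det(\lambda\mathbf{I}-\mathbf{F}(\boldsymbol{\theta}))=0$, because a simple root has nonzero $\lambda$-derivative of the characteristic polynomial. This yields a $\mathcal{C}^1$ branch $\lambda(\boldsymbol{\theta})$ near $\boldsymbol{\theta}_0$. By continuity of the spectrum, this branch remains the largest eigenvalue in a neighbourhood. The Hellmann--Feynman formula then gives the gradient explicitly as
\[
\frac{\partial\lambda_{\max}}{\partial\theta_l}=\mathbf{v}^\top\frac{\partial\mathbf{F}}{\partial\theta_l}\mathbf{v},
\]
where $\mathbf{v}$ is the unit leading eigenvector. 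This expression is what the gradient ascent will use.

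The main obstacle is eigenvalue multiplicity. If $\lambda_{\max}$ is repeated, it is in general only Lipschitz and directionally differentiable, not differentiable. Two options are available. The first is to state the proposition under the generic simplicity assumption and argue that the set of $\boldsymbol{\theta}$ where simplicity fails is exceptional. For an analytic family, the discriminant of the characteristic polynomial is analytic in $\boldsymbol{\theta}$, so its zero set is either everything or has empty interior and measure zero. The second option is to fall back on Lipschitz continuity plus Rademacher's theorem, giving differentiability almost everywhere, which suffices for gradient ascent in practice. I would present the first, simple-eigenvalue argument as the proof and note the second as a remark.
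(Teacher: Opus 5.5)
Your proposal is correct and follows essentially the paper's argument: symmetry of $\mathbf{F}$, Lipschitz continuity of eigenvalues via Weyl, and an explicit assumption that $\lambda_1(\mathbf{F})$ is simple. You also supply two ingredients the paper's proof uses only tacitly, namely the $\mathcal{C}^1$ dependence of $\mathbf{F}$ on $\boldsymbol{\theta}$ and the positivity of the diagonal of $\mathbf{D}$.

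One caution on your genericity remark: the discriminant of the full characteristic polynomial can vanish identically even when $\lambda_1$ is simple. For instance, if $\mathbf{X}$ has duplicate rows, the eigenvalue $0$ of $\mathbf{F}$ is repeated for every $\boldsymbol{\theta}$. The exceptional-set claim therefore needs an argument that targets the multiplicity of $\lambda_1$ alone, not the whole spectrum.
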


\begin{proof}
    Since we are using a PSD kernel that is symmetric, we know that the kernel Gram matrix $\mathbf{K}$ is symmetric and has real values, thus it is Hermitian. Symmetry is preserved for $\mathbf{F}$, making it Hermitian as well. Based on Weyl's inequality, eigenvalue maps are Lipschitz continuous on Hermitian matrices. Under the additional assumption of unit multiplicity for the leading eigenvalue of $\mathbf{F}$, differentiability of $\lambda_1(\mathbf{F}(\boldsymbol{\theta}))$ is guaranteed.
\end{proof}
We provide the relevant gradient expressions below, noting that $\lambda_1(\mathbf{F}) \equiv \lambda_1(\mathbf{F}(\boldsymbol{\theta}))$ refers to the leading eigenvalue of the matrix $\mathbf{F}$ when the vector of eigenvalues is $\boldsymbol{\theta}$. Similarly, the notation $\mathbf{A}(\boldsymbol{\theta})$ refers to a matrix $\mathbf{A}$ when the vector of bandwidths is $\boldsymbol{\theta}$, with gradient
\begin{align*}
    \frac{\mathrm{d} \lambda_1(\mathbf{F})}{\mathrm{d} \boldsymbol{\theta}} & = \mathbf{v}^\intercal \frac{\partial\mathbf{F}(\boldsymbol{\theta})}{\partial\boldsymbol{\theta}} \mathbf{v},
\end{align*}
where $\mathbf{v}$ is the eigenvector that corresponds to $\lambda_1(\mathbf{F})$. By the chain rule, the derivative of $\mathbf{F}(\boldsymbol{\theta}) \equiv \mathbf{F}$ with respect to $\boldsymbol{\theta}$ is
\begin{equation*}
    \frac{\partial \mathbf{F}}{\partial\boldsymbol{\theta}} = \frac{\partial \mathbf{F}}{\partial \bar{\mathbf{K}}(\boldsymbol{\theta})}\frac{\partial\bar{\mathbf{K}}(\boldsymbol{\theta})}{\partial \boldsymbol{\theta}} + \frac{\partial \mathbf{F}}{\partial \mathbf{D}(\boldsymbol{\theta})}\frac{\partial \mathbf{D}(\boldsymbol{\theta})}{\partial \boldsymbol{\theta}}.
\end{equation*}
From this point onwards we concentrate on derivatives with respect to the $j$th bandwidth parameter $\theta_j$. The Expressions for the individual terms involved in the chain rule are included in \ref{appen1}.

The bandwidth selection procedure consists of two steps, which are presented in Algorithms \ref{alg:stage1} \& \ref{alg:stage2} in \ref{appen2}. The first stage of the algorithm returns a vector of feasible bandwidths, with the feasibility criterion requiring the leading eigenvalue of $\mathbf{F}$ to exceed a stability threshold value $\tau = \lim_{m \rightarrow m^\text{max}} 1/\kappa(t_m, 1/c)$. More precisely, Stage 1 uses gradient ascent to find a vector of bandwidths that maximise the leading eigenvalue of $\mathbf{F}$, starting from an initial estimate of the variability of each feature plus some noise. This is repeated $S$ times, with different starting points returning at most $S$ candidates. Convergence is assessed by whether the Chebyshev distance between subsequent iterations exceeds a small tolerance $\varepsilon$ and terminates after a user-specified maximum number of iterations $T$. Stage 1 returns at most $S$ candidate bandwidth vectors, discarding any solutions for which the leading eigenvalue of $\mathbf{F}$ is below the stability threshold $\tau$.

Stage 2 is a refinement step and aims to `correct' the feasible bandwidth vectors obtained from Stage 1, so that a multi-cluster criterion $\mathcal{Q}$ is maximised. Examples of such criteria are the sum or the log of the sum of the first $c-1$ eigenvalues of $\mathbf{F}$, or the eigengap (that is, the difference between the $c$th and the $(c-1)$st eigenvalues of $\mathbf{F}$). Starting from each feasible candidate vector of bandwidths and the associated $\mathbf{F}$ matrix, we use gradient ascent to maximise $\mathcal{Q}$. The final vector of bandwidths returned is the one for which $\mathcal{Q}$ is maximised.

The justification behind implementing Step 2 is that Step 1 is blindly searching for a vector of bandwidths for which $\lambda_1(\mathbf{F})$ is maximised. However, this may be done at the cost of setting all but one of the bandwidths extremely large, thus oversmoothing all but one features. This will eventually lead to $\lambda_1(\mathbf{F}) > \tau$, so that all the variance in the RKHS is concentrated along a single direction, effectively collapsing all the data onto a line. Consequently, the centroids would be forced into a nearly collinear configuration, which cannot faithfully represent $c > 2$ distinct groups. Maximising a multi-cluster objective alleviates this issue as it directly promotes a configuration where the data are strongly aligned along the directions that separate the clusters, without artificially inflating the variance in irrelevant dimensions.

Notice that Stage 1 and Stage 2 involve two additional functions, namely \textproc{ProjAniso} and \textproc{ProjFeas} (Algorithms \ref{alg:proj_aniso} \& \ref{alg:proj_feas} in \ref{appen2}, respectively). The former imposes an anisotropy constraint and ensures that the proposed bandwidths at each step of the optimisation procedure satisfy $\max_j \theta_j / \min_j \theta_j \le \alpha$, where $\alpha \geq 1$. Setting $\alpha = 1$ forces the use of an isotropic kernel, whereas $\alpha \rightarrow \infty$ allows for any vector of bandwidths to be proposed. We recommend setting $1 < \alpha <\infty$, just so that the maximisation of $\lambda_1$ no longer risks collapsing the data onto a single dimension and allows for uninformative features to be smoothed out. The latter algorithm guarantees that the refined bandwidths at Stage 2 remain within the feasible region defined by $\lambda_1(\mathbf{F}) \ge \tau - \delta$, for a small $\delta$ value of e.g. $10^{-6}$. This is achieved by a backtracking line search with an Armijo-like condition \cite{armijo1966minimization} that enforces feasibility. The full two-stage bandwidth selection procedure is summarised in Algorithm \ref{alg:main}.

Finally, we briefly address the computational complexity of the proposed bandwidth selection algorithm. The primary computational bottleneck in both stages is the evaluation of the spectral properties of the $n \times n$ normalised kernel matrix $\mathbf{F}$ at each gradient ascent iteration. A naive, full eigendecomposition requires $\mathcal{O}(n^3)$ operations, which becomes prohibitive for large datasets. However, our design relaxes this requirement. Stage 1 only requires the computation of the leading eigenpair $(\lambda_1, \mathbf{v}_1)$, while Stage 2 requires at most the top $c$ extremal eigenpairs. Because the number of clusters is typically much smaller than the number of observations ($c \ll n$), we can bypass a full decomposition. Instead, we employ efficient iterative approximation techniques, such as the Lanczos algorithm \cite{lanczos1950iteration} that computes all necessary eigenpairs in $\mathcal{O}(cn^2)$ operations, reducing the computational overhead and ensuring the bandwidth search remains scalable in practice.

\begin{algorithm}[t]
\caption{Two-Stage Bandwidth Selection Procedure}
\label{alg:main}
\begin{algorithmic}[1]
\Require Data $\mathbf{X} \in \mathbb{R}^{n \times p}$, number of clusters $c \ge 2$, anisotropy bound $\alpha \ge 1$, number of random starts $S$, learning rate $\eta$, momentum $\gamma \in [0, 1)$, tolerance $\varepsilon$, max iterations $T$, stability threshold $\tau > 0$, symmetric PSD kernel function $k$, multi-cluster objective $\mathcal{Q}$.
\Ensure Optimal bandwidth vector $\boldsymbol{\theta}^* \in \mathbb{R}_{>0}^p$, eigendecomposition of $\mathbf{F}(\boldsymbol{\theta}^*)$.
\Statex $\triangleright$ \textbf{Stage 1:} Generate a set of feasible candidates
\State $\mathcal{F} \gets \Call{Stage1}{\mathbf{X}, \alpha, S, \eta, \gamma, \varepsilon, T, \tau, k}$
\If{$\mathcal{F} = \varnothing$}
    \State \Return \texttt{infeasible} (no bandwidth satisfies $\lambda_1 \ge \tau$)
\EndIf
\Statex $\triangleright$ \textbf{Stage 2:} Refine each candidate and select the best
\State $\mathcal{R} \gets \varnothing$
\ForEach{$(\ell(\boldsymbol{\theta})_0, \mathbf{F}_0) \in \mathcal{F}$}
    \State $(\boldsymbol{\theta}, \mathcal{Q}) \gets \Call{Stage2}{\mathbf{X}, c, \alpha, \ell(\boldsymbol{\theta})_0, \mathbf{F}_0, \eta, \gamma, \varepsilon, T, \tau, k, \mathcal{Q}}$
    \State $\mathcal{R} \gets \mathcal{R} \cup \{(\boldsymbol{\theta}, \mathcal{Q})\}$
\EndForEach
\State $(\boldsymbol{\theta}^*, \mathcal{Q}^*) \gets \arg\max_{(\boldsymbol{\theta}, \mathcal{Q}) \in \mathcal{R}} \mathcal{Q}$
\State Compute eigendecomposition of $\mathbf{F}(\boldsymbol{\theta}^*)$.
\State \Return $\boldsymbol{\theta}^*$, $\mathcal{Q}^*$, eigendecomposition of $\mathbf{F}(\boldsymbol{\theta}^*)$
\end{algorithmic}
\end{algorithm}

\section{Simulations}\label{sec:simulations}

In this Section we describe and analyse the results of simulations on synthetic and publicly available data sets. Our proposed method (KFRC) is benchmarked against Fuzzy Relational Clustering (FRC), FANNY \cite{kaufman2009finding}, FCM with Multiple Kernels (FCM-MK) \cite{baili2011fuzzy}, Membership Scaling FCM (MSFCM) \cite{zhou2020new}, and FCM by varying the fuzziness parameter (vFCM) \cite{chen2022improved}. While there is a plethora of fuzzy clustering methods proposed, we select those due to their popularity and the similarities they share with our approach (e.g. the use of kernels, or the attempt of tuning the fuzzifier parameter). The code to reproduce the simulations is available in \url{https://anonymous.4open.science/r/KFRC_Sims-9C1A}.

\subsection{Synthetic data}\label{subsec:syntheticdata}

Synthetic datasets are generated as mixtures of homogeneous spherical or elliptical Gaussian components with varying degrees of average pairwise overlap between clusters, defined as misclassification probabilities \cite{maitra2010simulating}. Each dataset consists of $n = 200$ observations in $p = 10$ dimensions. The first $p_s$ dimensions contain the cluster structure with the remaining $p_n = 10 - p_s$ dimensions being independent standard Gaussian noise variables appended to the signal block. The signal-to-noise ratio is reported as $\text{SNR} = p_s/p_n \in \{1/4, 1, 4, \infty \}$, corresponding to $p_s = \{2, 5, 8, 10\}$. The data sets contain $c = 2$, $4$, or $6$ clusters, with average pairwise overlap values of $0.001, 0.01, 0.05$, and $0.10$, corresponding to very low, low, moderate, and high overlap. We also assess the cases of balanced clusters and imbalanced clusters with one group being substantially smaller than the rest, containing just $5\%$ of the observations. This yields a total of 192 data sets, with 25 seeds being used to generate 4800 synthetic data sets.

Bandwidth selection for KFRC was conducted under several configurations: both sums and products of Gaussian kernels were used, with the anisotropy bound being set to $\alpha \in \{1, 5, 10, 25, \infty \}$. Three multi-cluster criteria were used: the sum of the leading $c-1$ eigenvalues, the log of their sum, and the eigengap heuristic. We used $S = 5$ random starts and allowed for at most up to $T=500$ iterations, with learning rate $\eta = 0.2$, momentum $\gamma = 0.9$, and convergence threshold $\varepsilon = 10^{-5}$. Four fuzzifier functions were used for clustering: the power, exponential, complementary root, and quadratic fuzzifiers. We set the common heuristic choice of $m = 2$ for the first three fuzzifier functions and $m = 0.5$ for the fourth. FRC was run under the exact same settings using the squared Euclidean distance, whereas FANNY was also implemented using the Manhattan distance and just the power fuzzifier. We further implemented FCM-MK, MSFCM, and vFCM. We assumed four Gaussian kernels with initially uniform weights for FCM-MK and set $m_0 = 2$, $a = 0.95$, and $b = 0.05$ for vFCM in order to update the fuzzifier as $m_{t+1} = am_t + b$ every $k = 2$ iterations, following the authors' recommendations. All clustering methods ran with a maximum of 200 iterations until convergence.

Cluster recovery is assessed via the hard Adjusted Rand Index (ARI) \cite{hubert1985comparing} and Adjusted Mutual Information (AMI) \cite{vinh_information_2010} between the hardened cluster and the true labels. Two variants of the fuzzy adjusted Rand index are reported: $\text{FARI}_\text{crisp}$ between the obtained partition matrix and the true labels \cite{campello2007fuzzy} and the Frobenius ARI ($\text{FARI}_\text{fuzzy}$) between the true component posteriors derived from the Gaussian mixture parameters and the obtained partition matrix \cite{andrews2022assessments}. Variable selection is diagnosed by the inverse Simpson index $W_{\text{eff}} = 1/\sum_j W_j^2$, where $W_j = (1/\theta_j)/\sum_{k=1}^p (1/\theta_k)$ is the kernel importance weight of dimension $j$, derived from the inverse of the fitted bandwidth. Finally, we assess convergence to the uniform solution using the partition coefficient $\mathrm{PC} = (\sum_{j=1}^n\sum_{i=1}^c u^2_{ij})/n$ \cite{bezdek1973cluster}, which equals one for perfectly crisp partitions and $1/c$ for uniformity. To make this diagnostic comparable across values of $c$, we report the uniformity index $U = (1-\text{PC})/(1-1/c)$, normalised so that zero corresponds to a fully crisp partition and one to the uniform partition.

\begin{figure}[!ht]
    \centering
    \includegraphics[width=\linewidth]{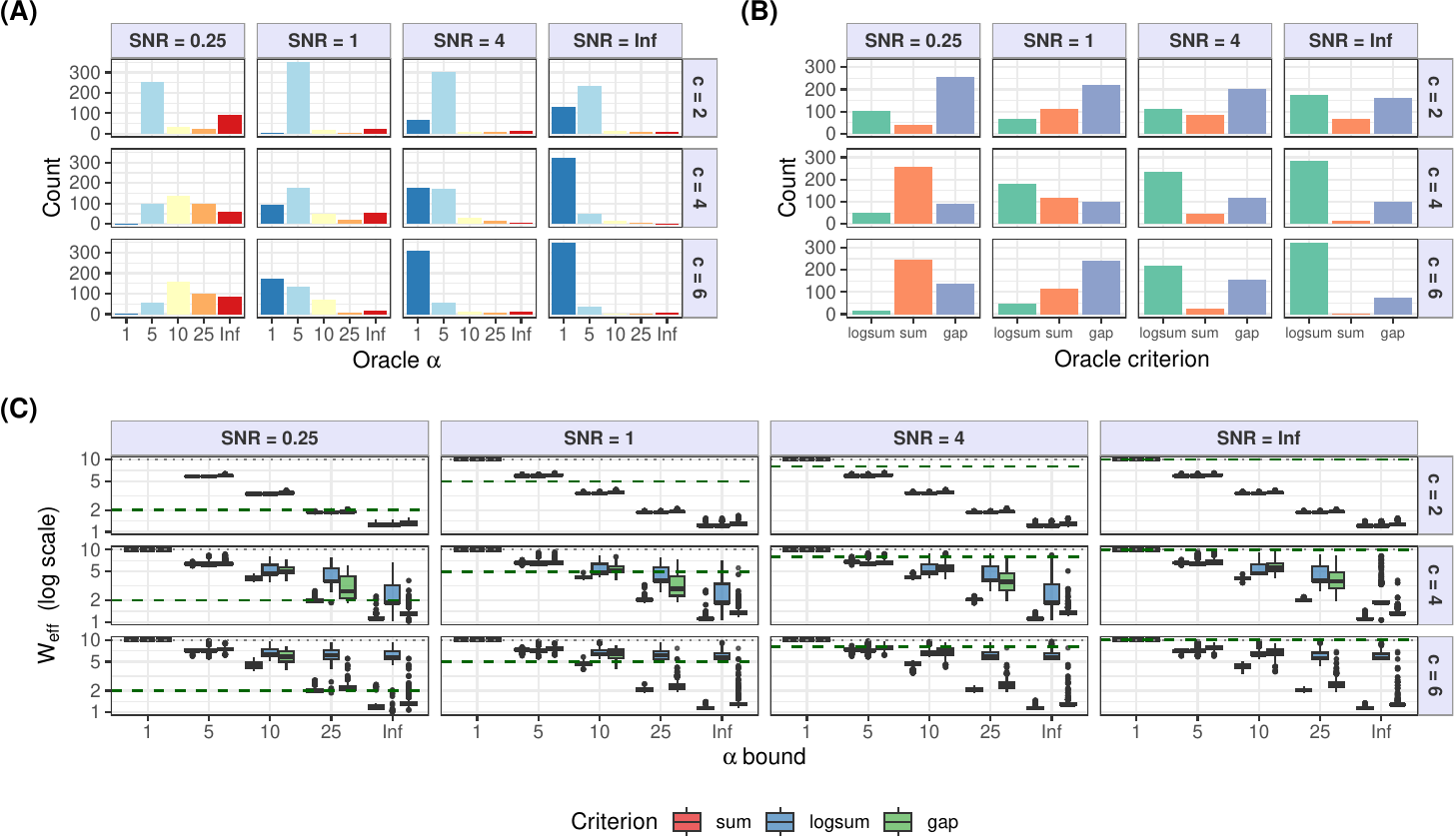}
    \caption{Top row: Bar chart of oracle anisotropy bound $\alpha$ \textbf{(A)} and oracle multi-cluster criterion \textbf{(B)} for varying $c$ and SNR. Bottom row \textbf{(C)}: Effective number of variables $W_{\text{eff}}$ (log scale) for the three criteria, at different $\alpha$ bounds per $c$ and SNR. The dotted grey and dashed green lines mark $W_{\text{eff}} = p$ and $W_{\text{eff}} = p_s$.}
    \label{fig:oracle_res_plot}
\end{figure}

The results of our simulation study are reported in Table \ref{tab:simresults} in \ref{appen3}. Overall, it is clear that KFRC with the quadratic and the complementary root fuzzifier functions, as well as vFCM, are the most successful methods in terms of cluster recovery performance. More precisely, KFRC outperforms all other methods for $\mathrm{SNR} < 1$ while maintaining solid performance for higher SNR values and for $c \geq 4$. The quadratic and complementary root fuzzifiers tend to outperform the power and exponential ones, which we attribute to their smaller stability thresholds. This also explains why the quadratic fuzzifier is more suitable for $c=2$ clusters than the complementary root. In general, the bandwidth selection procedure improves clustering performance compared to the use of simply Euclidean distances in FRC. FANNY, FCM-MK and MSFCM perform suboptimally, with most of them returning uniformity scores exceeding 0.90, indicating a nearly uniform solution. This is due to the heuristic choice of $m=2$ apparently being too high for these methods to prevent uniform collapse. The closest competitor of KFRC in terms of cluster recovery is vFCM, especially when $\mathrm{SNR} \geq 1$; however, uniformity scores for vFCM are all near-zero, particularly when $c>2$. This indicates almost crisp partitions, which is unsurprising given that vFCM anneals the fuzzifier towards one. However, such an outcome is undesirable when performing fuzzy clustering and proper fuzzy solutions are expected; KFRC does not exhibit this pathology, with $U$ remaining predominantly in the intermediate range $[0.1,0.8]$.

We finally look at the optimal anisotropy bounds and criteria per SNR and $c$, and how well the bandwidth selection strategy performs at selecting the right variables. These seem to be the most influential factors in the performance of KFRC, with the choice between sums or products being negligible, whereas cluster overlap or cluster imbalance are known to deteriorate clustering quality and are therefore not of particular interest \cite{costa2023benchmarking}. Clearly, a bit of anisotropy such as $\alpha = 5$ is optimal when $c=2$, as shown in Figure~\ref{fig:oracle_res_plot}(A). As SNR increases, allowing $\alpha \rightarrow \infty$ is suboptimal and in fact, $\alpha = 1$ is preferred when all features are informative, which is an expected result. Interestingly, higher $\alpha$ values such as 10 or 25 only seem to be preferred when there is a low SNR and $c \geq 4$ clusters. In this setting, the sum criterion yielded the best $\text{FARI}_\text{fuzzy}$ results, as can be seen from Figure~\ref{fig:oracle_res_plot}(B). When the number of clusters is small ($c<4$), the eigengap criterion proves optimal regardless of the SNR. Conversely, the log-sum criterion tends to yield superior performance for $c \geq 4$ under high SNR. The efficacy of the log-sum criterion in these cases stems from its statistical interpretation as the generalised variance \cite{wilks1932certain}. Geometrically, optimising this objective is equivalent to maximising the squared volume of the parallelepiped spanned by the principal axes of the data in the kernel-induced feature space, thereby enforcing maximal cluster separation. Lastly, Figure~\ref{fig:oracle_res_plot}(C) illustrates that the choice of a suitable anisotropy bound $\alpha$ is more important overall than the multi-cluster criterion. However, the latter is critical when $\text{SNR} < 1$ and $c \geq 4$. In such cases, the log-sum criterion overestimates the number of informative features, even for $\alpha \rightarrow \infty$.

\subsection{Benchmark data sets}\label{subsec:benchmarkdata}

We perform fuzzy clustering on six publicly available benchmark data sets using the same configurations as the ones in Section~\ref{subsec:syntheticdata}. Our results are summarised in Table~\ref{tab:realdata_res}. Notice that for KFRC, FRC, and FANNY, the results reported correspond to the configurations for which the highest $\mathrm{FARI}_\mathrm{crisp}$ was obtained. KFRC is characterised by a strong overall classification performance, achieving higher ARI and AMI scores than most its competitors. This is particularly the case on the Rice, Seeds, and Image Segmentation data sets. Moreover, it retains an appropriate degree of fuzziness, with its uniformity score being at reasonable levels, compared to vFCM which tends to produce hard partitions, or FANNY and FCM-MK which favour near-uniform solutions.

\begin{table*}[!ht]
\centering
\small
\setlength{\tabcolsep}{3pt}
\caption{Clustering performance of fuzzy algorithms on six benchmark datasets. Best scores per (dataset, metric) appear in bold.}
\begin{tabular}{llcccccc}
\toprule
Dataset &  & KFRC & FRC & FANNY & MSFCM & vFCM & FCM-MK \\
\midrule
\multirow{5}{*}{\shortstack[l]{Diabetes\\($n=768$,\\$p=8$, $c=2$)}} & ARI & 0.164 & 0.164 & 0.164 & \textbf{0.181} & 0.116 & 0.172 \\
 & AMI & 0.105 & 0.106 & 0.125 & \textbf{0.130} & 0.062 & 0.123 \\
 & $\mathrm{FARI}_{\mathrm{crisp}}$ & 0.096 & 0.098 & 0.000 & 0.014 & \textbf{0.115} & 0.008 \\
 & $U$ & 0.588 & 0.582 & 1.000 & 0.989 & 0.011 & 0.996 \\
\midrule
\multirow{5}{*}{\shortstack[l]{Rice\\($n=3810$,\\$p=7$, $c=2$)}} & ARI & \textbf{0.732} & 0.681 & 0.669 & 0.682 & 0.681 & 0.684 \\
 & AMI & \textbf{0.622} & 0.566 & 0.555 & 0.568 & 0.566 & 0.570 \\
 & $\mathrm{FARI}_{\mathrm{crisp}}$ & \textbf{0.712} & 0.628 & 0.248 & 0.405 & 0.682 & 0.391 \\
 & $U$ & 0.069 & 0.144 & 0.819 & 0.542 & 0.004 & 0.563 \\
\midrule
\multirow{5}{*}{\shortstack[l]{Seeds\\($n=210$,\\$p=7$, $c=3$)}} & ARI & \textbf{0.785} & \textbf{0.785} & 0.705 & \textbf{0.785} & \textbf{0.785} & 0.761 \\
 & AMI & \textbf{0.736} & \textbf{0.736} & 0.666 & \textbf{0.736} & \textbf{0.736} & 0.714 \\
 & $\mathrm{FARI}_{\mathrm{crisp}}$ & 0.691 & 0.692 & 0.282 & 0.580 & \textbf{0.784} & 0.451 \\
 & $U$ & 0.148 & 0.147 & 0.798 & 0.375 & 0.001 & 0.553 \\
\midrule
\multirow{5}{*}{\shortstack[l]{Vehicle\\($n=846$,\\$p=18$, $c=4$)}} & ARI & \textbf{0.135} & 0.069 & 0.080 & 0.067 & 0.066 & 0.072 \\
 & AMI & \textbf{0.161} & 0.087 & 0.099 & 0.083 & 0.087 & 0.084 \\
 & $\mathrm{FARI}_{\mathrm{crisp}}$ & \textbf{0.149} & 0.060 & 0.024 & 0.052 & 0.066 & 0.040 \\
 & $U$ & 0.060 & 0.361 & 0.947 & 0.613 & 0.002 & 0.801 \\
\midrule
\multirow{5}{*}{\shortstack[l]{Wine\\($n=178$,\\$p=13$, $c=3$)}} & ARI & 0.897 & 0.897 & 0.368 & \textbf{0.915} & 0.897 & 0.897 \\
 & AMI & 0.872 & 0.872 & 0.385 & \textbf{0.889} & 0.872 & 0.872 \\
 & $\mathrm{FARI}_{\mathrm{crisp}}$ & 0.676 & 0.678 & 0.000 & 0.429 & \textbf{0.898} & 0.316 \\
 & $U$ & 0.232 & 0.229 & 1.000 & 0.692 & 0.000 & 0.801 \\
\midrule
\multirow{5}{*}{\shortstack[l]{Image Segmentation\\($n=2310$,\\$p=18$, $c=7$)}} & ARI & \textbf{0.620} & 0.539 & 0.330 & 0.483 & 0.528 & 0.539 \\
 & AMI & \textbf{0.699} & 0.626 & 0.431 & 0.565 & 0.607 & 0.632 \\
 & $\mathrm{FARI}_{\mathrm{crisp}}$ & 0.504 & 0.430 & 0.077 & 0.365 & \textbf{0.529} & 0.235 \\
 & $U$ & 0.282 & 0.315 & 0.958 & 0.610 & 0.005 & 0.792 \\
\bottomrule
\end{tabular}
\label{tab:realdata_res}
\end{table*}

Some of the ARI and AMI scores are low for all data sets. This is because while the choice of classification data sets for benchmarking cluster analysis is common practice, classification labels need not agree with the clustering objectives \cite{hennig2015true}. What matters in unsupervised clustering is the interpretation of the clusters, whereas the groups induced by the classification labels need not be optimal in any sense \cite{bautista2025ground}. We investigate the obtained KFRC partitions by projecting the data onto the unit sphere via kernel PCA, weighting the colour of each observation by its fuzzy membership degree. Figure~\ref{fig:realdata_kpca_plots} reveals that the clusters exhibit clear spatial separation in this induced space. The Vehicle data set provides a particularly interesting example. Although KFRC achieved a relatively low ARI score, the projection reveals that the algorithm successfully concentrated the points into four distinct coherent regions, something that is further justified by the near-zero uniformity index value.

\begin{figure}[!ht]
    \centering
    \includegraphics[width=\linewidth]{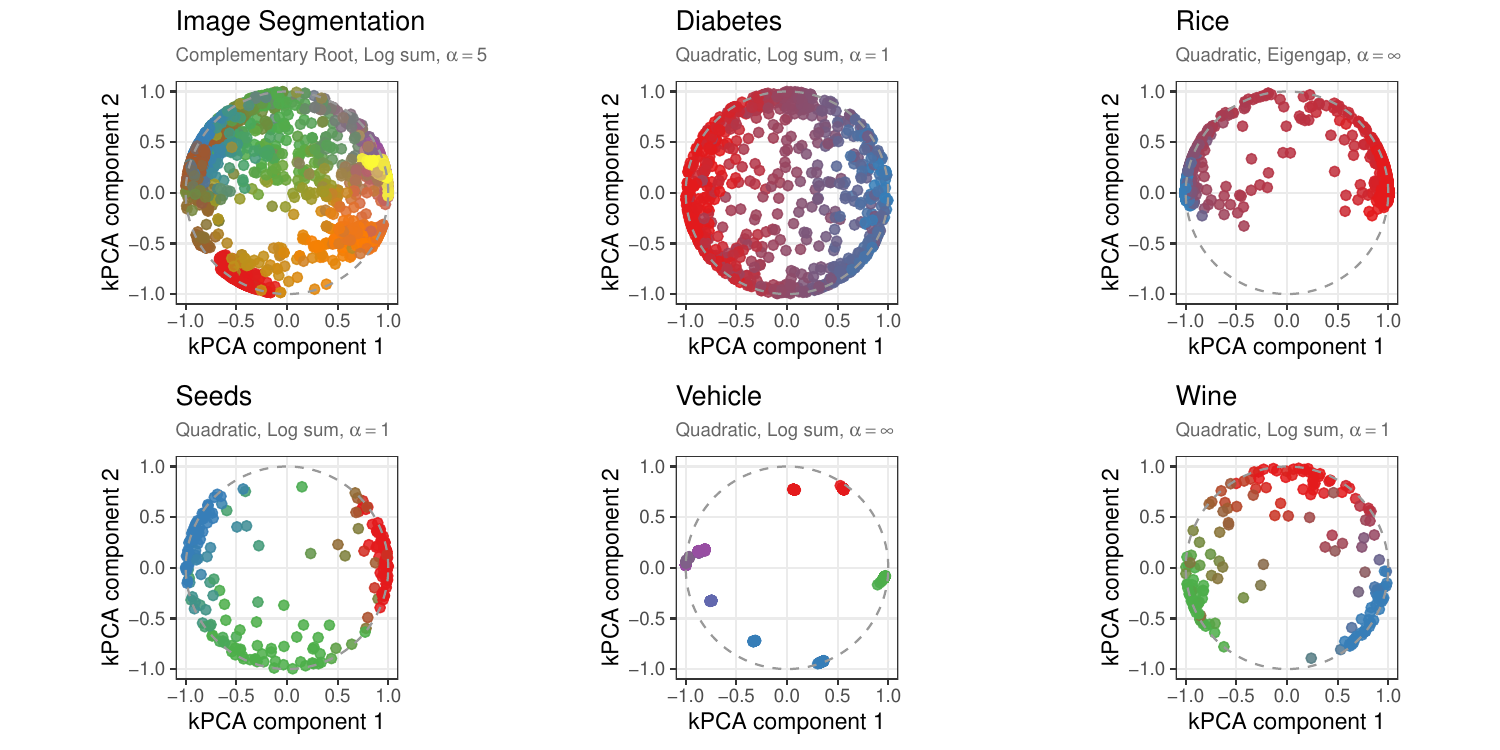}
    \caption{KFRC fuzzy memberships projected onto the unit sphere in feature space, with colours blending cluster prototypes by membership probability.}
    \label{fig:realdata_kpca_plots}
\end{figure}

\section{Conclusion}\label{sec:conclusion}

In this article we discussed the selection of kernel bandwidths for kernel fuzzy relational clustering. We argued that defining dissimilarities via kernel functions allows controlling the data geometry, thus enabling variable weighting and/or selection. Moreover, we obtained the conditions under which the choice of the fuzzifier parameter controls the fuzziness of the solution without the risk of collapsing to uniform membership probabilities, even for large fuzzifier values. As a result, we derived a novel fuzzifier function that yields a relaxed stability criterion and we devised a two-stage bandwidth selection algorithm that leverages stability and clustering quality.

Our simulations on both synthetic and publicly available data sets indicate that our proposed approach, KFRC, outperforms existing fuzzy relational clustering algorithms in terms of cluster recovery and ensures that a truly fuzzy partition is obtained. This is especially profound in cases where the cluster structure is defined in a subspace of the feature space. 

This work presents several limitations that may be addressed in future studies. Specifically, the value of the anisotropy bound is a critical choice in the success of the algorithm. We understand that this bound is associated with the amount of flexibility we wish to allow for our bandwidths, which is typically data-dependent. Future work could explore this in more theoretical detail, potentially even devising a dynamic updating scheme for $\alpha$. Finally, the use of stochastic gradient descent may lead to slow convergence for large sample sizes, hence alternative optimisation schemes may need to be employed.

\newpage
\appendix
\renewcommand{\thetheorem}{\Alph{section}.\arabic{theorem}}
\counterwithin*{algorithm}{section}
\renewcommand{\thealgorithm}{\Alph{section}.\arabic{algorithm}}
\counterwithin*{table}{section}
\renewcommand{\thetable}{\Alph{section}.\arabic{table}}
\section{Analytic derivative expressions for chain rule}\label{appen1}
\vspace{-1cm}
\begin{align*}
    \frac{\partial\mathbf{F}}{\partial \bar{\mathbf{K}}(\boldsymbol{\theta})} & = \frac{1}{n}\mathbf{D}(\boldsymbol{\theta})^{-1/2} \otimes \mathbf{D}(\boldsymbol{\theta})^{-1/2}, \\
    \frac{\partial \bar{\mathbf{K}}(\boldsymbol{\theta})}{\partial \theta_j} & = \mathbf{H}\frac{\partial \mathbf{K}(\boldsymbol{\theta})}{\partial \theta_j}\mathbf{H},\\
    \frac{\partial \mathbf{F}}{\partial \mathbf{D}(\boldsymbol{\theta})}& = -\frac{1}{2n}\left(\mathbf{D}(\boldsymbol{\theta})^{-3/2}\bar{\mathbf{K}}(\boldsymbol{\theta})\mathbf{D}(\boldsymbol{\theta})^{-1/2} +  \mathbf{D}(\boldsymbol{\theta})^{-1/2}\bar{\mathbf{K}}(\boldsymbol{\theta})\mathbf{D}(\boldsymbol{\theta})^{-3/2} \right),\\
    \frac{\partial \mathbf{D}(\boldsymbol{\theta})}{\partial \theta_j} & = \mathrm{diag}\left( \frac{\partial \bar{\mathbf{K}}(\boldsymbol{\theta})}{\partial \theta_j} \right) = \mathrm{diag}\left(\mathbf{H}\left( \frac{\partial \mathbf{K}(\boldsymbol{\theta})}{\partial \theta_j}\right) \mathbf{H}\right).
\end{align*}

\section{Algorithms}\label{appen2}
\begin{algorithm}[H]
\scriptsize
\caption{Anisotropy Projection (\textproc{ProjAniso})}
\label{alg:proj_aniso}
\begin{algorithmic}[1]
\Require Log-bandwidth vector $\ell(\boldsymbol{\theta}) \in \mathbb{R}^p$, bound $\log\alpha \geq 0$.
\Ensure $\ell(\boldsymbol{\theta})_{\text{proj}}$ with $\max(\ell(\boldsymbol{\theta})_{\text{proj}}) - \min(\ell(\boldsymbol{\theta})_{\text{proj}}) \le \log\alpha$.
\If{$\max(\ell(\boldsymbol{\theta})) - \min(\ell(\boldsymbol{\theta})) \le \log\alpha$}
    \State \Return $\ell(\boldsymbol{\theta})$
\Else
    \State $\bar{\eta} \gets \frac{1}{p}\sum_{j=1}^p \eta_j$
    \State $\ell(\boldsymbol{\theta})_{\text{proj}} \gets \bar{\eta} + \frac{\log\alpha}{\max(\ell(\boldsymbol{\theta}))-\min(\ell(\boldsymbol{\theta}))} (\ell(\boldsymbol{\theta}) - \bar{\eta})$
    \State \Return $\ell(\boldsymbol{\theta})_{\text{proj}}$
\EndIf
\end{algorithmic}
\end{algorithm}

\begin{algorithm}[H]
\scriptsize
\caption{Feasibility Projection (\textproc{ProjFeas})}
\label{alg:proj_feas}
\begin{algorithmic}[1]
\Require Data $\mathbf{X}$, current log-bandwidth $\ell(\boldsymbol{\theta})$, proposed update $\ell(\boldsymbol{\theta})_{\text{new}}$, stability threshold $\tau > 0$, symmetric PSD kernel function $k$.
\Statex \textit{Internal constants:} Line-search tolerance $\delta = 10^{-6}$, minimum step factor $\beta_{\min} = 10^{-4}$.
\Ensure Log-bandwidths $\ell(\boldsymbol{\theta})^*$ with $\lambda_1(\exp(\ell(\boldsymbol{\theta})^*)) \ge \tau - \delta$.
\State $\beta \gets 1$
\While{$\beta > \beta_{\min}$}
    \State $\ell(\boldsymbol{\theta})_{\text{trial}} \gets \ell(\boldsymbol{\theta}) + \beta(\ell(\boldsymbol{\theta})_{\text{new}} - \ell(\boldsymbol{\theta}))$
    \State $\boldsymbol{\theta}_{\text{trial}} \gets \exp(\ell(\boldsymbol{\theta})_{\text{trial}})$
    \State Compute $\mathbf{F}(\boldsymbol{\theta}_{\text{trial}})$ using $k$ and extract $\lambda_1 \equiv \lambda_1(\mathbf{F})$.
    \If{$\lambda_1 \ge \tau - \delta$}
        \State \Return $\ell(\boldsymbol{\theta})_{\text{trial}}$
    \EndIf
    \State $\beta \gets \beta / 2$
\EndWhile
\State \Return $\ell(\boldsymbol{\theta})$ 
\end{algorithmic}
\end{algorithm}

\begin{algorithm}[H]
\scriptsize
\caption{Feasibility; maximise $\lambda_1(\mathbf{F})$ (\textproc{Stage1})}
\label{alg:stage1}
\begin{algorithmic}[1]
\Require Data $\mathbf{X} \in \mathbb{R}^{n \times p}$, anisotropy bound $\alpha \ge 1$, number of random starts $S$, learning rate $\eta$, momentum $\gamma$, tolerance $\varepsilon$, max iterations $T$, stability threshold $\tau > 0$, symmetric PSD kernel function $k$.
\Ensure Set $\mathcal{F}$ of feasible pairs $(\ell(\boldsymbol{\theta}), \mathbf{F})$ with $\lambda_1(\exp(\ell(\boldsymbol{\theta}))) \ge \tau$, or $\mathcal{F} = \varnothing$ if infeasible.
\State $\mathcal{F} \gets \varnothing$
\State Compute column standard deviations $\widehat{\sigma}_1,\ldots,\widehat{\sigma}_p$ of $\mathbf{X}$.
\For{$s = 1$ to $S$}
    \If{$s = 1$}
        \State $\ell(\boldsymbol{\theta}) \gets \log \widehat{\boldsymbol{\sigma}}$
    \Else
        \State $\ell(\boldsymbol{\theta}) \gets \log \widehat{\boldsymbol{\sigma}} + \mathcal{N}(0,0.5^2\mathbf{I}_p)$
    \EndIf
    \If{$\alpha < \infty$}
        \State $\ell(\boldsymbol{\theta}) \gets \Call{ProjAniso}{\ell(\boldsymbol{\theta}), \log\alpha}$
    \EndIf
    
    \State $\mathbf{m} \gets \mathbf{0}$
    \For{$t = 1$ to $T$}
        \State $\boldsymbol{\theta} \gets \exp(\ell(\boldsymbol{\theta}))$
        \State Construct $\mathbf{F}(\boldsymbol{\theta})$ using $k$, extract $\lambda_1(\mathbf{F})$ and corresponding eigenvector $\mathbf{v}_1$.
        \State $\nabla_{\ell(\boldsymbol{\theta})} \lambda_1 \gets \boldsymbol{\theta} \odot \bigl( \mathbf{v}_1^\top \frac{\partial \mathbf{F}}{\partial \boldsymbol{\theta}} \mathbf{v}_1 \bigr)$.
        \State $\mathbf{m} \gets \gamma \mathbf{m} + \eta \nabla_{\ell(\boldsymbol{\theta})} \lambda_1$
        \State $\ell(\boldsymbol{\theta})_{\text{new}} \gets \ell(\boldsymbol{\theta}) + \mathbf{m}$
        \If{$\alpha < \infty$}
            \State $\ell(\boldsymbol{\theta})_{\text{new}} \gets \Call{ProjAniso}{\ell(\boldsymbol{\theta})_{\text{new}}, \log\alpha}$
        \EndIf
        \If{$\|\ell(\boldsymbol{\theta})_{\text{new}} - \ell(\boldsymbol{\theta})\|_\infty < \varepsilon$}
            \State $\ell(\boldsymbol{\theta}) \gets \ell(\boldsymbol{\theta})_{\text{new}}$
            \State \textbf{break}
        \EndIf
        \State $\ell(\boldsymbol{\theta}) \gets \ell(\boldsymbol{\theta})_{\text{new}}$
    \EndFor
    
    \State $\boldsymbol{\theta} \gets \exp(\ell(\boldsymbol{\theta}))$
    \State Compute $\mathbf{F}(\boldsymbol{\theta})$ using $k$ and extract $\lambda_1(\mathbf{F})$.
    \If{$\lambda_1 \ge \tau$}
        \State Store the pair $(\ell(\boldsymbol{\theta}), \mathbf{F})$ in $\mathcal{F}$.
    \EndIf
\EndFor
\State \Return $\mathcal{F}$
\end{algorithmic}
\end{algorithm}

\begin{algorithm}[H]
\scriptsize
\caption{Optimise Multi-Cluster Criterion (\textproc{Stage2})}
\label{alg:stage2}
\begin{algorithmic}[1]
\Require Data $\mathbf{X}$, number of clusters $c$, anisotropy bound $\alpha$, one feasible candidate $(\ell(\boldsymbol{\theta})_0, \mathbf{F}_0)$ from Stage 1, learning rate $\eta$, momentum $\gamma$, tolerance $\varepsilon$, max iterations $T$, stability threshold $\tau > 0$, symmetric PSD kernel function $k$, multi-cluster objective $\mathcal{Q}$.
\Ensure Optimised bandwidth vector $\boldsymbol{\theta}^*$, maximised multi-cluster criterion $\mathcal{Q}$, and eigenvalue summary.
\State $\ell(\boldsymbol{\theta}) \gets \ell(\boldsymbol{\theta})_0$.
\State $\mathbf{m} \gets \mathbf{0}$
\For{$t = 1$ to $T$}
    \State $\boldsymbol{\theta} \gets \exp(\ell(\boldsymbol{\theta}))$
    \State $\nabla_{\ell(\boldsymbol{\theta})} \lambda_i \gets \boldsymbol{\theta} \odot \bigl( \mathbf{v}_i^\top \frac{\partial \mathbf{F}}{\partial \boldsymbol{\theta}} \mathbf{v}_i \bigr)$.
    \State $\mathbf{m} \gets \gamma \mathbf{m} + \frac{\eta}{2} \nabla_{\ell(\boldsymbol{\theta})}\mathcal{Q}$
    \State $\ell(\boldsymbol{\theta})_{\text{new}} \gets \ell(\boldsymbol{\theta}) + \mathbf{m}$
    \If{$\alpha < \infty$}
        \State $\ell(\boldsymbol{\theta})_{\text{new}} \gets \Call{ProjAniso}{\ell(\boldsymbol{\theta})_{\text{new}}, \log\alpha}$
    \EndIf
    \State $\ell(\boldsymbol{\theta})_{\text{new}} \gets \Call{ProjFeas}{\mathbf{X}, \ell(\boldsymbol{\theta}), \ell(\boldsymbol{\theta})_{\text{new}}, \tau, k}$
    \If{$\|\ell(\boldsymbol{\theta})_{\text{new}} - \ell(\boldsymbol{\theta})\|_\infty < \varepsilon$}
        \State $\ell(\boldsymbol{\theta}) \gets \ell(\boldsymbol{\theta})_{\text{new}}$
        \State \textbf{break}
    \Else
        \State $\ell(\boldsymbol{\theta}) \gets \ell(\boldsymbol{\theta})_{\text{new}}$
        \State Re-compute $\mathbf{F}(\exp(\ell(\boldsymbol{\theta})))$ using $k$.
    \EndIf
\EndFor
\State $\boldsymbol{\theta}^* \gets \exp(\ell(\boldsymbol{\theta}))$
\State Compute $\mathbf{F}(\boldsymbol{\theta}^*)$ using $k$, the necessary leading eigenpairs, and update $\mathcal{Q}$.
\State \Return $\boldsymbol{\theta}^*$, $\mathcal{Q}$, $\lambda_1$, relevant leading eigenpairs.
\end{algorithmic}
\end{algorithm}

\section{Synthetic simulation results}\label{appen3}
{\scriptsize
\setlength{\tabcolsep}{2pt}
\begin{longtable}[l]{llcccc}
\caption{Mean cluster recovery and uniformity across $c = 2/4/6$ for each method and signal-to-noise ratio. Best KFRC results over anisotropy bound and criterion per dataset are reported. Best scores for each (SNR, $c$) combination appear in bold.}
\label{tab:simresults} \\
\toprule
Method &  &  SNR = 0.25 & SNR = 1 & SNR = 4 & SNR = $\infty$ \\
\midrule
\endfirsthead

\multicolumn{6}{l}{{\tablename\ \thetable{} - continued from previous page}} \\
\toprule
Method &  &  SNR = 0.25 & SNR = 1 & SNR = 4 & SNR = $\infty$ \\
\midrule
\endhead

\midrule
\multicolumn{6}{r}{{Continued on next page...}} \\
\endfoot

\bottomrule
\endlastfoot

\multirow{5}{*}{KFRC (Power)} & ARI & 0.701/0.570/0.518 & 0.641/0.509/0.283 & 0.613/0.466/0.278 & 0.553/0.436/0.306 \\
 & AMI & 0.647/0.622/0.628 & 0.572/0.549/0.395 & 0.539/0.494/0.367 & 0.477/0.467/0.377 \\
 & $\mathrm{FARI}_\mathrm{crisp}$ & 0.586/0.451/0.428 & 0.524/0.336/0.211 & 0.491/0.283/0.174 & 0.429/0.256/0.169 \\
 & $\mathrm{FARI}_\mathrm{fuzzy}$ & 0.751/0.676/0.650 & 0.696/0.552/0.376 & 0.663/0.458/0.313 & 0.595/0.420/0.282 \\
 & $U$ & 0.257/0.405/0.396 & 0.296/0.558/0.531 & 0.320/0.607/0.603 & 0.363/0.618/0.658 \\
\midrule
\multirow{5}{*}{KFRC (Exponential)} & ARI & 0.704/0.562/0.506 & 0.657/0.515/0.299 & 0.639/0.460/0.296 & 0.582/0.426/0.286 \\
 & AMI & 0.649/0.615/0.610 & 0.589/0.556/0.401 & 0.568/0.494/0.378 & 0.509/0.463/0.358 \\
 & $\mathrm{FARI}_\mathrm{crisp}$ & 0.645/0.392/0.312 & 0.574/0.316/0.179 & 0.545/0.269/0.155 & 0.489/0.244/0.140 \\
 & $\mathrm{FARI}_\mathrm{fuzzy}$ & 0.768/0.699/0.608 & 0.726/0.625/0.415 & 0.706/0.539/0.367 & 0.648/0.506/0.329 \\
 & $U$ & 0.151/0.482/0.607 & 0.208/0.561/0.699 & 0.230/0.592/0.748 & 0.253/0.606/0.757 \\
\midrule
\multirow{5}{*}{KFRC (Quadratic)} & ARI & \textbf{0.711}/\textbf{0.617}/\textbf{0.588} & 0.671/0.624/0.489 & 0.682/0.619/0.557 & 0.657/0.622/0.578 \\
 & AMI & \textbf{0.657}/\textbf{0.658}/0.647 & 0.606/0.638/0.559 & 0.619/0.623/0.608 & 0.602/0.629/0.617 \\
 & $\mathrm{FARI}_\mathrm{crisp}$ & \textbf{0.685}/\textbf{0.541}/0.438 & 0.640/0.534/0.330 & 0.617/0.519/0.389 & 0.566/0.501/0.427 \\
 & $\mathrm{FARI}_\mathrm{fuzzy}$ & \textbf{0.775}/\textbf{0.755}/0.709 & 0.738/0.756/0.616 & 0.730/0.732/0.674 & 0.690/0.732/0.687 \\
 & $U$ & 0.081/0.256/0.430 & 0.103/0.268/0.509 & 0.135/0.273/0.462 & 0.166/0.309/0.413 \\
\midrule
\multirow{5}{*}{KFRC (Comp. root)} & ARI & 0.707/0.609/0.559 & 0.664/\textbf{0.644}/\textbf{0.535} & 0.646/0.656/0.628 & 0.594/0.683/0.657 \\
 & AMI & 0.653/0.655/\textbf{0.657} & 0.598/\textbf{0.662}/\textbf{0.608} & 0.577/0.666/0.680 & 0.526/0.688/0.695 \\
 & $\mathrm{FARI}_\mathrm{crisp}$ & 0.656/0.527/\textbf{0.467} & 0.601/0.488/0.370 & 0.579/0.475/0.435 & 0.526/0.461/0.471 \\
 & $\mathrm{FARI}_\mathrm{fuzzy}$ & 0.772/0.754/\textbf{0.727} & 0.733/\textbf{0.780}/\textbf{0.682} & 0.714/\textbf{0.777}/\textbf{0.764} & 0.657/\textbf{0.786}/\textbf{0.783} \\
 & $U$ & 0.138/0.269/0.337 & 0.175/0.357/0.450 & 0.186/0.376/0.425 & 0.203/0.414/0.385 \\
\midrule
\multirow{5}{*}{FRC (Power)} & ARI & 0.488/0.277/0.232 & 0.648/0.426/0.294 & 0.681/0.529/0.373 & 0.698/0.570/0.401 \\
 & AMI & 0.443/0.266/0.244 & 0.593/0.388/0.289 & 0.628/0.493/0.373 & 0.648/0.542/0.400 \\
 & $\mathrm{FARI}_\mathrm{crisp}$ & 0.000/0.000/0.000 & 0.044/0.020/0.004 & 0.115/0.083/0.049 & 0.149/0.129/0.083 \\
 & $\mathrm{FARI}_\mathrm{fuzzy}$ & 0.000/0.000/0.000 & 0.033/0.026/0.004 & 0.109/0.090/0.055 & 0.153/0.149/0.089 \\
 & $U$ & 1.000/1.000/1.000 & 0.981/0.990/0.998 & 0.938/0.956/0.977 & 0.913/0.928/0.957 \\
\midrule
\multirow{5}{*}{FRC (Exponential)} & ARI & 0.477/0.281/0.253 & 0.643/0.492/0.425 & 0.671/0.570/0.499 & 0.690/0.602/0.525 \\
 & AMI & 0.434/0.266/0.290 & 0.588/0.477/0.470 & 0.617/0.561/0.553 & 0.639/0.597/0.568 \\
 & $\mathrm{FARI}_\mathrm{crisp}$ & 0.000/0.004/0.041 & 0.010/0.089/0.106 & 0.057/0.156/0.172 & 0.097/0.205/0.191 \\
 & $\mathrm{FARI}_\mathrm{fuzzy}$ & 0.000/0.008/0.100 & 0.008/0.192/0.255 & 0.057/0.316/0.384 & 0.108/0.402/0.404 \\
 & $U$ & 1.000/0.997/0.955 & 0.995/0.910/0.880 & 0.966/0.834/0.811 & 0.933/0.784/0.785 \\
\midrule
\multirow{5}{*}{FRC (Quadratic)} & ARI & 0.496/0.291/0.232 & 0.666/0.583/0.492 & 0.712/0.656/0.615 & 0.733/0.676/0.641 \\
 & AMI & 0.450/0.320/0.329 & 0.611/0.613/0.569 & 0.657/0.673/0.668 & 0.682/0.688/0.683 \\
 & $\mathrm{FARI}_\mathrm{crisp}$ & 0.036/0.077/0.097 & 0.261/0.286/0.250 & 0.414/0.412/0.371 & 0.468/0.469/0.420 \\
 & $\mathrm{FARI}_\mathrm{fuzzy}$ & 0.032/0.175/0.252 & 0.355/0.562/0.529 & 0.567/0.702/0.677 & 0.633/0.759/0.720 \\
 & $U$ & 0.980/0.905/0.813 & 0.745/0.665/0.684 & 0.560/0.517/0.559 & 0.494/0.450/0.495 \\
\midrule
\multirow{5}{*}{FRC (Complementary root)} & ARI & 0.493/0.283/0.220 & 0.653/0.589/0.489 & 0.693/0.655/0.621 & 0.712/0.679/0.646 \\
 & AMI & 0.448/0.355/0.322 & 0.598/0.625/0.575 & 0.640/0.675/0.679 & 0.661/0.690/0.693 \\
 & $\mathrm{FARI}_\mathrm{crisp}$ & 0.005/0.130/0.138 & 0.163/0.282/0.295 & 0.294/0.385/0.419 & 0.345/0.438/0.458 \\
 & $\mathrm{FARI}_\mathrm{fuzzy}$ & 0.003/0.367/0.337 & 0.195/0.654/0.621 & 0.386/0.736/0.753 & 0.457/0.778/0.774 \\
 & $U$ & 0.998/0.690/0.588 & 0.879/0.586/0.523 & 0.748/0.488/0.442 & 0.694/0.444/0.399 \\
\midrule
\multirow{5}{*}{FANNY (Euclidean)} & ARI & 0.522/0.236/0.167 & 0.649/0.325/0.224 & 0.678/0.359/0.256 & 0.693/0.390/0.244 \\
 & AMI & 0.473/0.242/0.188 & 0.594/0.304/0.235 & 0.624/0.325/0.260 & 0.642/0.347/0.256 \\
 & $\mathrm{FARI}_\mathrm{crisp}$ & 0.000/0.000/0.000 & 0.000/0.000/0.000 & 0.000/0.002/0.000 & 0.001/0.004/0.002 \\
 & $\mathrm{FARI}_\mathrm{fuzzy}$ & -0.000/-0.000/-0.000 & -0.000/-0.000/-0.000 & 0.000/0.001/0.000 & 0.000/0.004/0.001 \\
 & $U$ & 1.000/1.000/1.000 & 1.000/1.000/1.000 & 1.000/0.999/1.000 & 1.000/0.998/0.999 \\
\midrule
\multirow{5}{*}{FANNY (Manhattan)} & ARI & 0.579/0.275/0.226 & 0.668/0.355/0.252 & 0.682/0.400/0.271 & 0.690/0.400/0.255 \\
 & AMI & 0.530/0.273/0.241 & 0.615/0.326/0.257 & 0.629/0.352/0.269 & 0.639/0.351/0.254 \\
 & $\mathrm{FARI}_\mathrm{crisp}$ & 0.000/0.000/0.000 & 0.000/0.001/0.000 & 0.003/0.006/0.002 & 0.010/0.007/0.003 \\
 & $\mathrm{FARI}_\mathrm{fuzzy}$ & -0.000/-0.000/-0.000 & 0.000/0.000/0.000 & 0.001/0.005/0.002 & 0.005/0.008/0.002 \\
 & $U$ & 1.000/1.000/1.000 & 1.000/1.000/1.000 & 0.999/0.998/0.999 & 0.997/0.997/0.999 \\
\midrule
\multirow{5}{*}{FCM-MK} & ARI & 0.484/0.272/0.234 & 0.639/0.422/0.294 & 0.673/0.521/0.371 & 0.691/0.564/0.392 \\
 & AMI & 0.441/0.266/0.250 & 0.586/0.383/0.292 & 0.621/0.485/0.371 & 0.644/0.536/0.392 \\
 & $\mathrm{FARI}_\mathrm{crisp}$ & 0.000/0.000/0.000 & 0.040/0.015/0.004 & 0.104/0.075/0.042 & 0.135/0.118/0.071 \\
 & $\mathrm{FARI}_\mathrm{fuzzy}$ & 0.000/0.000/0.000 & 0.029/0.018/0.004 & 0.095/0.079/0.047 & 0.135/0.130/0.075 \\
 & $U$ & 1.000/1.000/1.000 & 0.983/0.993/0.999 & 0.946/0.962/0.980 & 0.923/0.937/0.965 \\
\midrule
\multirow{5}{*}{vFCM} & ARI & 0.519/0.254/0.204 & \textbf{0.745}/0.591/0.487 & \textbf{0.792}/\textbf{0.694}/\textbf{0.640} & \textbf{0.805}/\textbf{0.712}/\textbf{0.681} \\
 & AMI & 0.478/0.266/0.246 & \textbf{0.688}/0.623/0.556 & \textbf{0.736}/\textbf{0.706}/\textbf{0.689} & \textbf{0.751}/\textbf{0.719}/\textbf{0.716} \\
 & $\mathrm{FARI}_\mathrm{crisp}$ & 0.335/0.088/0.083 & \textbf{0.714}/\textbf{0.575}/\textbf{0.465} & \textbf{0.789}/\textbf{0.692}/\textbf{0.638} & \textbf{0.803}/\textbf{0.712}/\textbf{0.681} \\
 & $\mathrm{FARI}_\mathrm{fuzzy}$ & 0.348/0.094/0.091 & \textbf{0.746}/0.610/0.500 & \textbf{0.826}/0.736/0.690 & \textbf{0.840}/0.761/0.740 \\
 & $U$ & 0.623/0.728/0.635 & 0.118/0.070/0.118 & 0.030/0.008/0.015 & 0.028/0.000/0.003 \\
\midrule
\multirow{5}{*}{MSFCM} & ARI & 0.491/0.281/0.233 & 0.645/0.474/0.321 & 0.682/0.595/0.454 & 0.700/0.627/0.491 \\
 & AMI & 0.446/0.266/0.244 & 0.590/0.455/0.322 & 0.629/0.577/0.456 & 0.649/0.615/0.489 \\
 & $\mathrm{FARI}_\mathrm{crisp}$ & 0.000/0.000/0.000 & 0.039/0.060/0.024 & 0.110/0.177/0.123 & 0.143/0.228/0.179 \\
 & $\mathrm{FARI}_\mathrm{fuzzy}$ & 0.000/0.000/0.000 & 0.028/0.065/0.022 & 0.102/0.196/0.135 & 0.144/0.264/0.195 \\
 & $U$ & 1.000/1.000/1.000 & 0.984/0.969/0.989 & 0.942/0.894/0.932 & 0.918/0.859/0.895 \\
\end{longtable}
}

\section{Proof of conditions holding for the complementary root fuzzifier}\label{appen4}

\noindent Let the complementary root fuzzifier be defined as $t_\text{cr}(u) = 1 - (1-u)^{1/m}$, where the fuzziness parameter is $m > 1$ and the membership probability is $u \in [0,1]$.

\begin{theorem}
The complementary root fuzzifier $t_\text{cr}(u)$ satisfies all five conditions required for geometric fidelity and stable convergence in fuzzy relational clustering.
\end{theorem}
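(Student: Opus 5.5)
The plan is to verify the five properties one at a time by direct computation on $t_\text{cr}(u) = 1-(1-u)^{1/m}$, $m>1$: the three mandatory ones (boundary guarding, monotonicity, convexity), then origin sparsity, then the asymptotic stability scaling from the rule for avoiding uniform collapse. Throughout, write $a \defeq 1/m \in (0,1)$. The range $m\in(1,\infty)$ gives $m^\text{min}=1$, where $t_\text{cr}(u)=u$ (the hard case), and $m^\text{max}=\infty$, which corresponds to $a\to 0^+$.

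\textbf{Mandatory properties.} Boundary guarding is immediate: $t_\text{cr}(0)=1-1=0$ and $t_\text{cr}(1)=1-0=1$. The map $u\mapsto(1-u)^a$ is $\mathcal{C}^\infty$ on $(0,1)$, so $t_\text{cr}\in\mathcal{C}^2(0,1)$. Differentiating gives
\begin{equation*}
t'_\text{cr}(u)=a(1-u)^{a-1}, \qquad t''_\text{cr}(u)=a(1-a)(1-u)^{a-2}.
\end{equation*}
Both are strictly positive on $(0,1)$ because $a>0$, $1-a>0$ (this is where $m>1$ is used) and $1-u>0$. This gives monotonicity and convexity.

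\textbf{Origin sparsity.} We have $\lim_{u\to0^+}t'_\text{cr}(u)=a=1/m>0$ for every finite $m$. This is the property the power fuzzifier lacks.

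\textbf{Asymptotic geometric stability.} Put $s\defeq 1-1/c=(c-1)/c\in(0,1)$ and substitute the derivatives into $\kappa(t_m,u)=2[t'_m(u)]^2/[t_m(u)t''_m(u)]$ at $u=1/c$. After cancelling powers of $s$, this should give
\begin{equation*}
\kappa(t_\text{cr},1/c)=\frac{2a\,s^{a}}{(1-a)(1-s^{a})}.
\end{equation*}
The step I expect to need the most care is the limit $m\to m^\text{max}$, i.e.\ $a\to0^+$, because numerator and denominator both vanish. I would write $s^a=e^{a\ln s}$, so that $1-s^a=-a\ln s+\mathcal{O}(a^2)$ and $s^a\to1$. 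This yields
\begin{equation*}
\lim_{m\to\infty}\frac{1}{\kappa(t_\text{cr},1/c)}=\frac{-\ln s}{2}=\frac{1}{2}\ln\!\left(\frac{c}{c-1}\right).
\end{equation*}
As $c\to\infty$, $\ln(c/(c-1))=-\ln(1-1/c)\sim 1/c\to0$, which is exactly the required asymptotic scaling. As a remark, for $c=2$ the threshold is $\tfrac12\ln2\approx0.347$, strictly below the value $1/2$ for the power and exponential fuzzifiers, and it decreases monotonically in $c$. Together these five verifications prove the theorem.
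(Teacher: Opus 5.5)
Your proof is correct and follows the paper's argument essentially step for step. You use the same derivatives (your $a(1-a)(1-u)^{a-2}$ is the paper's $\frac{m-1}{m^2}(1-u)^{(1-2m)/m}$), the same direct checks of the first four conditions, and the same reduction of $1/\kappa(t_\text{cr},1/c)$ to $\frac{m-1}{2}\bigl[(c/(c-1))^{1/m}-1\bigr]$ with limit $\frac12\log\bigl(c/(c-1)\bigr)\to 0$; your expansion $s^a=e^{a\ln s}$ is the paper's standard limit $\lim_{x\to0}(b^x-1)/x=\log b$ in another form, and your extra remark that the $c=2$ threshold $\frac12\ln 2\approx 0.347$ lies below $1/2$ is correct.
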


\begin{proof}
We evaluate $t_\text{cr}(u)$ against each of the five conditions sequentially. The first and second derivatives of $t_\text{cr}(u)$ with respect to $u$ are given by:
\begin{align}
    t'_{cr}(u) &= \frac{1}{m}(1-u)^{\frac{1-m}{m}}, \label{eq:deriv1} \\
    t''_{cr}(u) &= \frac{m-1}{m^2}(1-u)^{\frac{1-2m}{m}} \label{eq:deriv2}
\end{align}

\noindent \textbf{Condition 1: Boundary guarding.} \\
Evaluating the function at the boundaries $u=0$ and $u=1$:
\begin{align*}
    t_\text{cr}(0) &= 1 - (1-0)^{1/m} = 1 - 1 = 0, \\
    t_\text{cr}(1) &= 1 - (1-1)^{1/m} = 1 - 0 = 1
\end{align*}
Thus, Condition 1 is satisfied.

\noindent \textbf{Condition 2: Strict monotonicity.} \\
For any $u \in (0,1)$ and $m > 1$, the term $(1-u)$ is strictly positive. Consequently, the exponential evaluation $(1-u)^{\frac{1-m}{m}}$ evaluates to a real, strictly positive number. Because $1/m > 0$, it follows from Equation \eqref{eq:deriv1} that $t'_{cr}(u) > 0 \ \forall u \in (0,1)$. The function is clearly continuous and differentiable on the open interval, satisfying Condition 2.

\noindent \textbf{Condition 3: Strict convexity.} \\
For any $u \in (0,1)$ and $m > 1$, the term $(1-u)$ is strictly positive. Furthermore, because $m > 1$, the coefficient $(m-1)/m^2 > 0$. Therefore, from Equation \eqref{eq:deriv2}, the product of strictly positive terms yields $t''_{cr}(u) > 0 \ \forall u \in (0,1)$, proving strict convexity and satisfying Condition 3.

\noindent \textbf{Condition 4: Origin sparsity.} \\
Evaluating the right-hand limit of the first derivative as $u \to 0^+$:
\begin{equation*}
    \lim_{u \to 0^+} t'_{cr}(u) = \lim_{u \to 0^+} \frac{1}{m}(1-u)^{\frac{1-m}{m}} = \frac{1}{m} (1)^{\frac{1-m}{m}} = \frac{1}{m}
\end{equation*}
Since $m$ is finite and strictly greater than $1$, the limit evaluates to a finite, strictly positive constant $1/m > 0$. Condition 4 is thus satisfied.

\noindent \textbf{Condition 5: Asymptotic stability threshold.} \\
We evaluate the spectral stability threshold inverse $1/\kappa(t_\text{cr}, u)$ at the uniform state $u = 1/C$:
\begin{equation*}
    \frac{1}{\kappa(t_\text{cr}, 1/C)} = \frac{t_\text{cr}(1/C) \cdot t''_{cr}(1/C)}{2[t'_{cr}(1/C)]^2}
\end{equation*}
Substituting the definitions of the function and its derivatives, and simplifying:
\begin{align*}
    \frac{1}{\kappa(t_\text{cr}, 1/C)} &= \frac{\left[1 - \left(1 - \frac{1}{C}\right)^{1/m}\right] \left[ \frac{m-1}{m^2}\left(1 - \frac{1}{C}\right)^{\frac{1-2m}{m}} \right]}{2 \left[ \frac{1}{m}\left(1 - \frac{1}{C}\right)^{\frac{1-m}{m}} \right]^2} \\
    &= \frac{m-1}{2} \frac{1 - \left(1 - \frac{1}{C}\right)^{1/m}}{\left(1 - \frac{1}{C}\right)^{1/m}} \\
    &= \frac{m-1}{2} \left[ \left(1 - \frac{1}{C}\right)^{-1/m} - 1 \right] \\
    &= \frac{m-1}{2} \left[ \left(\frac{C}{C-1}\right)^{1/m} - 1 \right]
\end{align*}
To evaluate the inner limit as $m \to \infty$, we utilise the standard limit property $\lim\limits_{x \to 0} (a^x - 1)/x = \log(a)$. Letting $x = 1/m$, we obtain:
\begin{equation*}
    \lim_{m \to \infty} \frac{1}{\kappa(t_\text{cr}, 1/C)} = \frac{1}{2} \log\left(\frac{C}{C-1}\right)
\end{equation*}
Finally, taking the outer limit as the number of clusters $C \to \infty$:
\begin{equation*}
    \lim_{C \to \infty} \frac{1}{2} \log\left(\frac{C}{C-1}\right) = \frac{1}{2} \log(1) = 0
\end{equation*}
The spectral threshold converges strictly to zero, satisfying Condition 5.

% \noindent \textbf{Condition 6: Asymptotic structural preservation.} \\
% We construct the marginal derivative ratio $R_m(u_{ik}, u_{il})$:
% \begin{align*}
%     R_m(u_{ik}, u_{il}) &= \frac{t'_{cr}(u_{ik})}{t'_{cr}(u_{il})} = \frac{\frac{1}{m}(1-u_{ik})^{\frac{1-m}{m}}}{\frac{1}{m}(1-u_{il})^{\frac{1-m}{m}}} \\
%     &= \left( \frac{1-u_{ik}}{1-u_{il}} \right)^{\frac{1}{m} - 1}
% \end{align*}
% Taking the pointwise limit as $m \to \infty$, the exponent $(1/m - 1) \to -1$. Thus:
% \begin{equation*}
%     R_\infty(u_{ik}, u_{il}) = \lim_{m \to \infty} R_m = \left( \frac{1-u_{ik}}{1-u_{il}} \right)^{-1} = \frac{1-u_{il}}{1-u_{ik}}
% \end{equation*}
% This limit function is strictly continuous for all $u_{ik}, u_{il} \in (0,1)$. By fixing $u_{il}$ and allowing $u_{ik} \to 1^-$, $R_\infty \to \infty$. By fixing $u_{ik}$ and allowing $u_{il} \to 1^-$, $R_\infty \to 0^+$. Thus, the image of $R_\infty$ spans $(0, \infty)$, preventing asymptotic squashing and satisfying Condition 6.
\end{proof}

\bibliographystyle{elsarticle-num-names} 
\bibliography{export}

\end{document}